\newcommand{\kibitz}[2]{\ifnum\Comments=1{\color{#1}{#2}}\fi}
\newcommand{\hma}[1]{\kibitz{blue}{[hma: #1]}}
\newcommand{\rmr}[1]{\kibitz{auburn}{[rm: #1]}}
\newcommand{\vsq}[1]{\ifnum\vsqCounter=1{\vspace{#1}}\fi}
\colorlet{darkblue}{blue!40!black}
\definecolor{auburn}{rgb}{0.43, 0.21, 0.1}
\definecolor{orange}{rgb}{1, 0.5, 0}
\definecolor{lightblue}{rgb}{0.1176, 0.5647, 1}
\definecolor{matlabblue}{rgb}{0    0.4470    0.7410}
\definecolor{matlabred}{rgb}{0.8500    0.3250    0.0980}
\definecolor{matlabyellow}{rgb}{ 0.9290    0.6940    0.1250}
\definecolor{matlabpurple}{rgb}{0.4940    0.1840    0.5560}
\definecolor{auburn}{rgb}{0.43, 0.21, 0.1}
\colorlet{darkblue}{blue!35!black}
\theoremstyle{plain}
\newtheorem{proposition}{Proposition}
\theoremstyle{definition}
\newtheorem{definition}{Definition} 
\newtheorem{example}{Example}
\newcommand{\report}{r}					
\newcommand{\reportSet}{\mathcal{R}}	
\newcommand{\betahat}{\tilde{\beta}}		
\newcommand{\thetahat}{\tilde{\theta}}
\newcommand{\mech}{\mathcal{M}}			
\newcommand{\tzero}{s}					
\newcommand{\tone}{t}						
\newcommand{\uhat}{\tilde{u}}				
\newcommand{\Uhat}{\tilde{U}}
\newcommand{\zc}{z^0}
\newcommand{\zcmax}{\varphi}		
\newcommand{\zcmaxhat}{\tilde{\zcmax}} 
\newcommand{\ut}{ut}					
\newcommand{\sw}{sw}					
\newcommand{\rev}{\mathit{rev}}		
\newcommand{\bmin}{\underline{b}}
\newcommand{\bmax}{\bar{b}}
\newcommand{\fixedV}{w}				
\newcommand{\fixedP}{p}				
\newcommand{\0}{^{(0)}}
\newcommand{\1}{^{(1)}}
\newcommand{\2}{^{(2)}}
\renewcommand{\th}{^{\mathrm{th}}}
\newcommand{\fb}{^{\mathrm{FB}}}
\newcommand{\setR}{\mathbb{R}}
\newcommand{\txtwp}{~\mathrm{w.p.}~}
\newcommand{\txtif}{~\mathrm{if}~}
\providecommand{\pwfun}[1]{\left\lbrace \begin{array}{ll} #1 \end{array} \right.}
\newcommand{\one}[1]{\mathds{1} \{ #1\}}
\newcommand{\E}[1]{\mathbb{E}\left[ #1 \right]}
\newcommand{\Pm}[1]{\mathbb{P}\left[ #1 \right]}
\newcommand{\txtSP}{{\mathrm{SP}}}
\newcommand{\txtCSP}{{\mathrm{CSP}}}
\title{Penalty Bidding Mechanisms for Allocating Resources and Overcoming Present Bias%
\thanks{The authors would like to thank Yiling Chen, Ido Erev, Matt Juszczak, Scott Kominers, Jake Marcinek, 
and Kyle Pasake for helpful comments and discussions.}
}
\author{Hongyao Ma%
\thanks{John A. Paulson School of Engineering and Applied Sciences, Harvard University, Cambridge,
MA, 02138, USA. Email: hongyaoma@seas.harvard.edu.} 
\and 
Reshef Meir%
\thanks{Department of Industrial Engineering and Management,
Technion - Israel Institute of Technology, Technion City, Haifa 3200003, Israel. Email: reshefm@ie.technion.ac.il.}
\and
David C. Parkes%
\thanks{John A. Paulson School of Engineering and Applied Sciences, Harvard University, 33 Oxford Street, Maxwell Dworkin 229, Cambridge, MA 02138, USA. Email: parkes@eecs.harvard.edu.}
\and Elena Wu-Yan%
\thanks{Harvard College, Cambridge,
MA, 02138, USA. Email: elenaw@college.harvard.edu.} 
}
\begin{document}

\maketitle

\begin{abstract}
From skipped exercise classes to last-minute cancellation of dentist appointments, underutilization of reserved resources abounds. Likely reasons include uncertainty about the future, further exacerbated by present bias. In this paper, we unite resource allocation and commitment devices through the design of contingent payment mechanisms, and propose the {\em two-bid penalty-bidding mechanism}. This extends an earlier mechanism proposed by \citet{ma2019contingent}, assigning the resources based on willingness to accept a no-show penalty, while also allowing each participant to increase her own penalty in order to counter present bias. We establish a simple dominant strategy equilibrium, regardless of an agent's level of present bias or degree of ``sophistication''. Via simulations, we show that the proposed mechanism substantially improves utilization and achieves higher welfare and better equity in comparison with mechanisms used in practice and mechanisms that optimize welfare in the absence of present bias. 
\end{abstract}

\section{Introduction} \label{sec:intro}


``It was a disaster,'' recalled Matt Juszczak, co-founder of Turnstyle Cycle and Bootcamp, a fitness company that offers cycling and bootcamp classes across five studios in the Boston area. ``When we opened our first indoor cycling location in Boston's Back Bay, we saw 40 to 50 no-shows and late cancels in an average day--- that's over 15,000 in a year!''\footnote{\href{https://business.mindbody.io/education/blog/tips-reduce-no-shows-and-late-cancels-your-fitness-business}{https://business.mindbody.io/education/blog/tips-reduce-no-shows-and-late-cancels-your-fitness-business}, visited September 1, 2018.}
Like many well-known exercise studios, Turnstyle allowed customers to reserve class spots several days in advance with a first-come-first-serve reservation system. However, ambitious customers,  overestimating the amount of time in their schedules or their desire to exercise in the future, often snag a spot only to ultimately cancel last-minute or simply not show up. 

\begin{figure}[t!]
\centering
\includegraphics[scale=0.24]{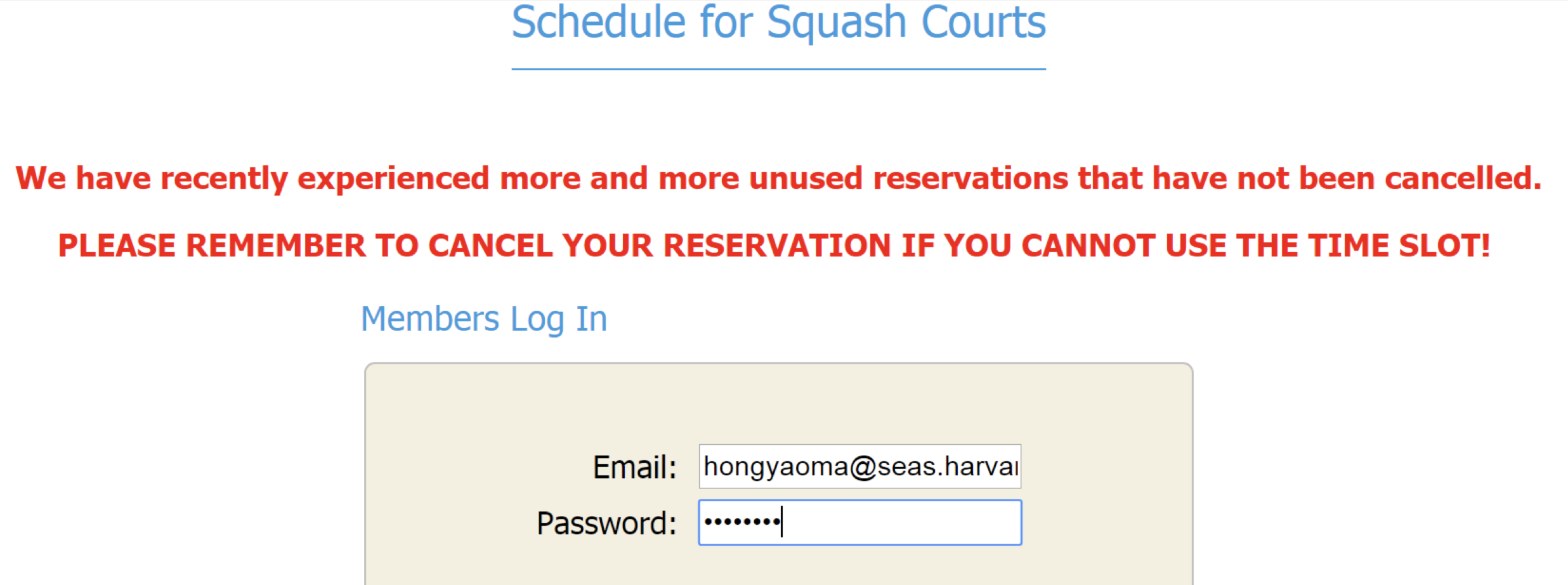}
\caption{Log in page of the squash court reservation system at the Harvard Hemenway Gymnasium. \vspace{-1em}  \label{fig:hemenway} 
}
\end{figure}

Similarly, the squash courts at Harvard's Hemenway Gymnasium used to allow members to reserve time-slots to play squash up to seven days ahead of time. Even though the reservation window has since been reduced to three days, the gym operators still feel the need to display the warning shown in Figure~\ref{fig:hemenway} every time someone logs into the reservation system.\footnote{\url{https://recreation.gocrimson.com/recreation/facilities/Hemenway}, visited September 1, 2018.}
For other examples, organizers of free events report to Eventbrite that their no-show rate can be as high as 50\%,\footnote{\url{https://www.eventbrite.com/blog/asset/ultimate-way-reduce-no-shows-free-events/}, visited 5/6/2019.} 
and even for prepaid events organized through Doorkeeper,
the fraction of no-shows can be  20\%.\footnote{\url{https://www.doorkeeper.jp/event-planning/increasing-participants-decreasing-no-shows?locale=en}, visited May 6, 2019.} 
Studies of outpatient clinics report that
no-shows can range from 23-34\%, with  no-shows costing
an estimated 14\% of daily revenue as well as impacting efficiency ~\cite{mehra2018reducing}. 

Common to all these examples is the presence of uncertainty, self-interest, and down-stream decisions by participants, together with the interest of the planner (gym manager, event organizer, health clinic) in a resource being used and not wasted. Beyond revenue and efficiency motivations, utilization can have positive externality in and of itself, cycling studio members gaining motivation from fellow bikers, for example. 
%
Complicating the problem is \emph{present bias}, often phrased as the constant struggle between our current and future selves~\citep{laibson1997golden,o1999doing}. It is easy to imagine that at the beginning of the week, someone might prefer a spin class over watching TV on Friday, reserving a spot, but by the time Friday comes around preferring to just watch TV.

Recognizing the problem of low utilization, many reservation systems charge a penalty for no-show. Turnstyle has started to charge a \$20 penalty for missing a class,\footnote{\url{https://kb.turnstylecycle.com/policies/what-is-the-late-cancel-no-show-policy}, visited May 6, 2019.}
patients who miss appointments at hospitals may need to pay a fee that is not covered by insurance,%
\footnote{\url{https://huhs.harvard.edu/sites/default/files/HDS\%20New\%20Patient\%20Welcome\%20Letter-eps-converted-to.pdf}, visited May 10th, 2018.} 
and organizers of some conferences collect a deposit that is returned only to students who actually attended talks.\footnote{\url{https://risingstarsasia2018.ust.hk/guidelines.php}, visited May 10th, 2018.} These approaches can be viewed as ad-hoc, first-come-first serve schemes, for some choice of no-show penalty: a penalty that is too small is not effective, whereas a penalty that is too big will drive away participation in the scheme. 

%
In recent work, \citet{ma2019contingent} model participants' future value from using a resource as a random variable, and propose the
\emph{contingent second price mechanism} (CSP). The mechanism elicits from each participant a bid on the highest no-show penalty she is willing accept, assigns the resources to the highest bidders, and charges the highest losing bid as the penalty. The bids provide a good signal for participants' reliability, and the CSP mechanism provably optimizes utilization in dominant strategy equilibrium among a large family of mechanisms. With present bias, however, charging the highest losing bid as penalty no longer guarantees truthfulness: a rational 
participant always prefers smaller penalties, but a present-biased participant may favor larger penalties when a stronger commitment device is more effective in overcoming myopia.
%

\subsection{Our results}

In this paper, we unite through contingent payment mechanisms the allocation of scarce resources under uncertainty, and the design of commitment devices--- techniques that aim to overcome present bias and to fulfill a plan for desired future behavior.

We generalize the model proposed in \citet{ma2019contingent}, decomposing an agent's value for a resource into the {\em immediate value} and the {\em future value}. The immediate value is a random variable, and the value experienced at the time of using the resource (modeling for example the opportunity cost and present pain of going to the gym). The future value is not gained until some future time (consider, for example the future benefit from better health). 
We incorporate the standard quasi-hyperbolic discounting model for time-inconsistent preferences~\cite{laibson1997golden,o1999doing}, such that when an agent is making a decision on whether to use a resource, the future value is discounted by a present bias factor. 
Agents may also have different levels of sophistication in regard to their level of self-awareness, modeled by agents' belief on their own present bias factor--- a {\em naive agent} believes she does not discount the future, a {\em sophisticated agent} knows her bias factor precisely and is able to perfectly forecast her future actions, and a {\em partially naive agent} resides somewhere in between~\cite{o1999doing,o2001choice}.

In period~0, an agent's private information is the distribution of the immediate value, the (fixed) future value, and what she believes to be her present bias factor. 
A mechanism elicits information from each agent, assigns each of $m\geq 1$ resources, and may determine both a {\em base payment} that an assigned agent always pays, as well as a {\em penalty} for each assigned agent in the event of a no-show. In period~1, each assigned agent learns her immediate value, and with knowledge of the penalty and future value, decides (under the influence of present bias) whether or not to use the resource.

The \emph{two-bid penalty-bidding mechanism} (2BPB) works as follows. In period~0, the mechanism elicits a bid from each agent, representing the highest penalty she is willing to accept for no show, and assigns the resources to the $m$ highest bidders. 
To address the non-monotonicity of agent's expected utility in the penalty, the mechanism asks each assigned agent to report a penalty weakly higher than the $m+1\th$ bid, representing the actual amount she would like to be charged in the case of a no-show (thereby operating also as a commitment device).

Given the option to choose an optimal level of commitment weakly above the highest losing bid, it is a dominant strategy under the 2BPB mechanism for each agent to bid her maximum acceptable no-show penalty, regardless of her immediate value distribution, future value, level of present bias, or degree of sophistication (Theorem~\ref{thm:dse_two_bid_mech}). %
While naive agents do not see the value of commitment and generally do not take any commitment device when offered~\cite{bryan2010commitment,beshears2011self}, the 2BPB mechanism is still able to help  reducing the loss of welfare and utilization due to no show, since a commitment device is designed through the mechanism, and is an integral part of the system. 
We also prove that the mechanism satisfies voluntary participation, and runs without a budget deficit.


We show via simulation that the 2BPB mechanism not only improves utilization, but also achieves higher social welfare than the standard $m+1\th$ price auction, which is welfare-optimal for settings without present bias. 
The mechanism also outperforms a family of mechanisms widely used in practice, which assign resources first-come-first-served  and charge a fixed no-show penalty. Moreover, in a population where agents have different levels of present bias, the more biased agents benefit more than the less biased agents under the 2BPB mechanism. This results in better equity compared with the outcome under the $m+1\th$ price auction, where the most biased agents gain little or no welfare.

\subsection{Related Work} \label{sec:related_work}

To the best of our knowledge, this current paper\rmr{``this paper" or ``the current paper"} is the first to study resource assignment in the presence of uncertainty and
present bias. 
The closest related work is on the design of mechanisms to improve
resource utilization where agents have uncertain future
values~\cite{ma2019contingent,Ma_ijcai16,Ma_aamas17}.
%
The proposed mechanisms, however, no longer have dominant strategy equilibrium for present-biased agents.
This present work builds on \citet{ma2019contingent}, generalizing the model to incorporate present bias, and makes use of two-bid penalty bidding to align incentives.
Crucially, the 
mechanism does not need any knowledge about agents' level of bias or value distributions.

Contingent payments have arisen
in the past in the context of oil drilling license
auctions~\cite{hendricks1988empirical},
royalties~\cite{caves2003contracts,deb2014implementation}, ad
auctions~\cite{varian2007position}, and selling a
firm~\cite{ekmekci2016just}. Payments that are contingent on some
observable world state also play the role of improving revenue as well as  hedging risk~\cite{skrzypacz2013auctions}. In our model, in contrast, payments are contingent on agents' own downstream decisions and serve the role of commitment devices. 
In regard to auctions in which actions take place after the time of contracting, \citet{atakan2014auctions} study auctions where the value of taking each action depends on the collective actions by others, but these actions are taken before rather than after observing the world state. \citet{courty2000sequential} study the problem of revenue maximization in selling airline tickets, where passengers have uncertainty about their value for a trip, and may decide not to take a trip after realizing their actual values. The type space considered there is effectively one-dimensional, and present bias is not considered.


\citet{laibson1997golden} introduced the quasi-hyperbolic discounting for modeling time-inconsistent decision making, where in addition to exponential discounting, all future utilities are discounted by an additional present bias factor. 
%
%
\rmr{this can be slightly more connected to the intro text. E.g. "the distinction we presented above between naive and sophisticated present biased agents, is due to ...,  who also ..."}%
\citet{o1999doing} classify present-biased agents into naive agents
(unaware of present bias)
and sophisticated agents (fully aware), and find  that naive agent procrastinate immediate-cost activities and  
do immediate-reward activities too soon, while sophistication lessens procrastination but intensifies the
doing-too-soon. \citet{o2001choice} also study how the role of choice affects procrastination, and introduce the idea of a
partially naive agent, who is aware of present bias but underestimates the degree of this bias. 
Researchers have also attempted to estimate the present bias factor in the real world, 
however, there has not been consensus about this~\cite{augenblick2015experiment,cohen2016measuring,ericson2018intertemporal}.

%
Researchers have also examined various kinds of {\em commitment devices} to mitigate present bias.
\citet{gine2010put}, for example, offer smokers a savings account that forfeits deposits to a charity if the they fail a urine test for nicotine.
By bundling a ``want'' activity (listening to one's favorite audio book) with a ``should'' activity (going to the gym), \citet{milkman2013holding} evaluate the effectiveness of temptation bundling as a commitment device to tackle two self-control problems at a time. See also \citet{laibson1997golden} and \citet{beshears2015self}.
%
%
In a different setting, \citet{kleinberg2014time} consider how to modify the sequencing of tasks available to individuals in order to help a present-biased agent adopt a more optimal sequence of tasks.
This work is later extended to consider sophisticated agents, the interaction between present bias and sunk-cost bias, and agents whose present bias factors are uncertain~\cite{kleinberg2016planning,kleinberg2017planning,gravin2016procrastination}. 
There are no uncertain values or costs in these models, 
and no contention for limited resources.

\section{Preliminaries} \label{sec:preliminaries}

We first introduce the model for the assignment of $m$ homogeneous resources, leaving a discussion of the generalization to heterogeneous resources to Section~\ref{sec:discussion}. 
There is a set of agents $N = \{1,~\dots, ~n \}$ and three time periods. In period~$0$, when resources need to be assigned, 
the value of each agent $i \in N$ for using a resource is uncertain, represented by $V_i = V_i\1 + v_i\2$. 
The period~1 \emph{immediate value} 
$V_i\1$ is a random variable with cumulative distribution function  $F_i$, whose exact (and potentially negative) value is not realized
until period $1$. 
This models, for example, the opportunity cost and present pain of going to the gym. The period~$2$ {\em future value} $v_i\2 \geq 0$ models the expected future benefit for agent $i$ (
e.g. the future benefit from better health), if she uses a resource in period~$1$.

Agents are present-biased, such that at any point of time, when making decisions, agent $i$ discounts her utility from all future periods by a factor of $\beta_i \in [0,~1]$. Agents may not be fully aware of this bias, however, and agent $i$ believes that when making decisions, her future utility will be discounted by a factor of $\betahat_i \in [\beta_i,~1]$. 
An agent with $\betahat_i = \beta_ 1 = 1$ is \emph{rational} and does not discount her future utility. 
An agent with $\betahat_i = \beta_i < 1$ is said to be \emph{sophisticated}, and fully aware of the degree of her present bias (thus is able to correctly predict her future decisions). An agent with $\beta_i<1$ and $\betahat_i = 1$ is said to be \emph{naive}, believing that she will make rational decisions in the future, and an agent with $\betahat_i \in (\beta_i, 1)$ is said to be \emph{partially naive}.

Let $\theta_i = (F_i, v_i\2, \beta_i, \betahat_i)$ denote  agent $i$'s \emph{type}, and $\theta = (\theta_1, \dots, \theta_n)$ denote a type profile. The tuple $\thetahat_i = (F_i, v_i\2, \betahat_i)$ is agent $i$'s private information at period $0$, when the assignment of resources is determined. Each allocated agent privately learns the realization $v_i\1$ of $V_i\1$ and then decides whether to use the resource at period $1$. Define $V^+_i \triangleq \max \{V_{i}, \; 0\}$. Following \citet{ma2019contingent}, we make the following assumptions about $V_i$ for each $i \in N$:

\vsq{-0.5em}

\begin{enumerate}[({A}1)]
	\item $\E{V_{i}^+}>0$, which means that a rational agent gets positive value from using the resource with non-zero probability, thus the \emph{option} to use the resource 
has positive value. 
	\item $\E{V_{i}^+} < +\infty$, which means that agents do not get  infinite expected utility from the option to use the resource, thus would not be willing to pay an unboundedly large payment for it. 
	\item $\E{V_i} < 0$, meaning that being forced to always use the resource regardless of what happens is not favorable for any agent, so that no agent would accept any unboundedly large no-show penalty for the right to use a resource.\footnote{ 
Regardless of the degree of present bias or sophistication, an agent for which (A3) is violated is willing to accept a 1 billion dollar no-show penalty, (almost) always  use the resource, and get a non-negative utility in expectation.}
\end{enumerate}

We now provide a few examples of different models for agent types. 

%

\begin{example}[$(c_i,\fixedP_i)$ model] \label{ex:vipi} The future value for agent $i$ for using a resource is $v_i\2 = w_i > 0$, however, she is able to do so only with probability $\fixedP_i \in (0,1)$, and at a period~$1$ opportunity cost modeled by $V_i\1 = -c_i$. With probability $1 - \fixedP_i$, agent $i$ is unable to show up to use the resource. 
This hard constraint can be modeled as $V_i\1$ taking value $-\infty$ with probability $1-\fixedP_i$.
See Figure~\ref{fig:pmf_vipi}. We have $\E{V_i^+} = (w_i - c_i) \fixedP_i > 0$, and $\E{V_i} = -\infty$ thus assumptions (A1)-(A3) are satisfied. 
\end{example}

\begin{figure}[t!]
\centering
\begin{subfigure}[t]{0.45\textwidth}
	\centering
	\small{
	\begin{tikzpicture}[scale = 0.9][font=\normalsize]
		\draw (0,0) node  {$V_i\1 = \pwfun{-c_i, &\txtwp\ \fixedP_i \\
			-\infty, &\txtwp\ 1 - \fixedP_i}$};		
		\draw (0,-0.5) node{{\color{white} some text}};
	\end{tikzpicture}	
}
\end{subfigure}%
\hspace{2 em}
\begin{subfigure}[t]{0.45\textwidth}
\begin{tikzpicture}[scale = 1][font = \small]
\draw[->] 	(0.5,0) -- (2.8,0) node[anchor=north] {$v$};
\draw[->] 	(2,-0.2) -- (2,1);
\draw (2,0.8)  node[anchor=west] {$f_i(v)$};
\draw[dotted] (0,0) -- (0.5,0);
\draw (-0.5, 0) -- (0, 0);

\draw (1, 0) -- (1,0.6);
\draw [fill] (1, 0.6) circle [radius=0.04] node[anchor=south] {\small{$\fixedP_i$}};

\draw (-0.25, 0) -- (-0.25,0.3);
\draw [fill] (-0.25, 0.3) circle [radius=0.04]node[anchor=south] {\small{$1-\fixedP_i$}};

\draw	(1, 0) node[anchor= north] {\small{$-c_i$}}
		(-0.25, 0) node[anchor = north] {\small{$-\infty$}};

\end{tikzpicture}
\end{subfigure}%
\caption{Distribution of $V_i\1$ under the $(c_i,~ \fixedP_i)$ type model. \label{fig:pmf_vipi}} \vsq{-0.5em}
\end{figure}
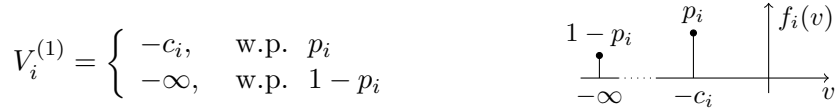

\begin{example}[Exponential model] \label{ex:exp_model} 
The opportunity cost for an agent to use the resource in period one is an exponentially distributed random variable with parameter $\lambda_i$, (i.e. $-V_i\1 \sim \mathrm{Exp}(\lambda_i)$). If the agent used a resource, she gains a future utility of $v_i\2 = \fixedV_i > 0$. 
See Figure~\ref{fig:pdf_exp}. The expectation of $\E{V_i\1}$ is $\lambda_i^{-1}$, thus $\E{V_i} = \lambda_i^{-1}+ \fixedV_i $ and (A1)-(A3) are satisfied when $w_i < \lambda_i^{-1}$.
\end{example}

\begin{figure}[t!]
\centering
\begin{subfigure}[t]{0.45\textwidth}
\centering
\small{
\begin{tikzpicture}[scale = 0.9][font=\normalsize]
	\draw (0,0) node  {$	f_i(v) = \pwfun{ \lambda_i e^{\lambda_i v}, & v \leq 0, \\
	0, & v > 0.}$};		
	\draw (0,-0.6) node{{\color{white} some text}};
\end{tikzpicture}	
}
\end{subfigure}%
%
%
%
\begin{subfigure}[t]{0.45\textwidth}
\centering
\begin{tikzpicture}[scale = 1.3][font=\small]

\draw[->] (-0.8,0.3) -- (2.2,0.3)  node[anchor=north] {$v$};
  
\draw[->] (1.5,0.1) -- (1.5,1.2);
\draw (1.5,1.1) node[anchor=west] {$f_i(v)$};

\draw[scale=1,domain=-0.5:1.5,smooth,variable=\x] plot (\x,{0.5*exp(1.1*(\x-1.5))+0.3});

\draw[-] (1.5, 0.3) -- (1.5,0.8);

\draw (1.5, 0.8) node[anchor=west] {$\lambda_i$};
\end{tikzpicture}
\end{subfigure}%
%
\caption{Agent period~1 value distribution under the exponential type model.
\label{fig:pdf_exp}} \vsq{-0.5em}
\end{figure}
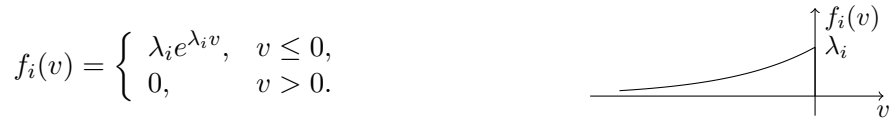

\subsection{Two-Period Mechanisms}

We consider {\em two-period mechanisms}, denoted as $\mech = (\reportSet, x, \tzero, \tone)$, and following the timeline suggested by \citet{ma2019contingent}. The mechanisms can, in general, involve both a base payment that an agent will pay irrespective of her utilization decision as well as a penalty for no show. The mechanisms are defined for a general message space $\reportSet$ for reports, and with allocation rule $x$, 
and with each agent $i$ facing a base payment $\tzero_i(\report)$ and a penalty $\tone_i(\report)$. 

\if 0
At period 0, each agent makes a report $\report_i$ from some set of messages $\reportSet$. Let $\report = (\report_1, \dots, \report_n)\in \reportSet^n$ denote a report profile. Based on the reports, an {\em allocation rule} $x = (x_1, \dots, x_n): \reportSet^n \rightarrow \{0, 1\}^n$ assigns the right to use the resources to a subset $A \subseteq N$ of at most $m$ agents, namely those agents for whom $x_{i}(\report) = 1$. $x_i(\report) = 0$ for all $i \neq A$. 
Each agent is charged $\tzero_i(\report)$ in period 1, and the
mechanism also determines the penalty $\tone_i(\report)$ for each allocated agent $i \in A$ for no-shows (we set $\tone_i(\report) = 0$ for $i \notin A$). 
%
\fi

The timeline for a two-period mechanism is as follows: \medskip

\noindent{\em Period~$0$:} \vspace{-0.5em}
\begin{enumerate}[$\bullet$]
	\setlength\itemsep{0em}
	\item Each agent $i \in N$ reports $\report_i \in \reportSet$ to the mechanism based on knowledge of $\thetahat_i$.
	\item The mechanism allocates the right to use the resources to a subset of agents, $A \subseteq N $, with $|A| \leq m$, thus $x_i(\report) = 1$ for all $i \in A$ and $x_i(\report)= 0$ for all $i \notin A$. 
	\item For each agent $i\in N$, the  mechanism determines a base payment $\tzero_i(\report)$ that the agent will pay for sure.  For each assigned agent $i\in A$, the mechanism determines an additional penalty $\tone_{i}(\report)$ that will be charged for a no show.
%
\end{enumerate}

\vspace{-0.3em}

\noindent {\em Period~$1$}: \vspace{-0.5em}
\begin{enumerate}[$\bullet$]
	\setlength\itemsep{0em}
	\item The mechanism collects base payment $\tzero_i(\report)$ from each agent.
	\item Each allocated agent $i \in A$ privately observes the realized immediate value $v_{i}\1$ of $V_i\1$, and decides whether to use the resource based on this value, the future value $v_i\2$, and the no-show penalty $\tone_i(\report)$.
	\item The mechanism collects the penalty $\tone_i(\report)$ from any agent $i \in A$ who is a no show. 
\end{enumerate}

\begin{example}[$(m+1)\th$ price auction] The standard $m+1\th$ price auction for assigning $m$ resources can be described as a two-period mechanism, where the report space is $\mathcal{R} =\mathbb R$. Ordering agents in decreasing order of their reports, s.t. $\report_1 \geq \report_2 \geq \dots \report_n$ (breaking ties randomly), the allocation rule is $x_i(\report) = 1$ for all $i \leq m$, $x_i(\report) = 0$ for $i > m$. Each allocated agent is charged  $\tzero_i(\report) = \report_{m+1}$, and all other payments are zero. The $m+1\th$ price auction does not make use of any penalties.
\end{example}

\begin{example}[Generalized contingent second price mechanism] The
\emph{generalized contingent second price} (GCSP) mechanism~\cite{ma2019contingent} %
  for assigning $m$ homogeneous resources collects a single bid from each agent, allocates the right to use resource to the $m$ highest bidders, and charges the $m+1\th$ highest bid, \emph{but only if an allocated agent fails to use the resource}. Formally, $\mathcal{R} =\mathbb R$. Ordering the agents s.t. $\report_1 \geq \report_2 \geq \dots \report_n$ (breaking ties randomly), we have $x_i(\report) = 1$ for $i \leq m$, $x_i(\report) = 0$ for $i > m$, $\tone_i(\report) = \max_{i' \notin A} \report_{i'}$, and all other payments are 0.
\end{example}

\medskip

We assume risk-neutral, expected-utility maximizing agents, but with quasi-hyperbolic discounting for future utilities. 
%
Each assigned agent $i$ faces a \emph{two part payment} $(z,y)$, where $z$ is the \emph{penalty} the agent pays in period~1 in the case of no-show, and $y$ is the \emph{base payment} the agent always pays in period~1. 
When period~$1$ arrives and the agent learns
the realized immediate value $v_i\1$, 
she discounts the future by $\beta_i$, and 
makes decisions as if that she will gain utility $v_i\1 - y + \beta_i v_i\2 $ from using the resource, and $-y -z$ from not using the resource. Based on this, the agent uses the resource if and only if
\begin{align}
  v_i\1 - y + \beta_i v_i\2  \geq -y - z \Leftrightarrow v_i\1 \geq - z
  - \beta_i v_i\2,
\end{align}
breaking ties in favor of using the resource. Let $\one{\cdot}$ be the indicator function, and define $u_i(z)$,  the expected utility of the agent when facing penalty $z$, as
\begin{align}
	u_i(z) \triangleq \E{ (V_i\1 + v_i\2) \one{V_i\1 \geq -z - \beta_i v_i\2}} - z \Pm{V_i\1< -z - \beta_i v_i\2}. \label{equ:exp_util_z} 
\end{align}

The \emph{actual} expected utility of an allocated agent facing a two-part payment $(z,y)$ is $u_i(z) - y$. Under a two-period mechanism $\mech$, given report profile $\report$, agent $i$'s expected utility is $x_i(\report) u_i(\tone_i(\report)) - \tzero_i(\report)$.

An agent believes that she will make decisions as if she has present-bias factor $\betahat_i$, and will decide to use the resource in period~1 if and only if
\begin{align}
  v_i\1 \geq - z - \betahat_i v_i\2.
\end{align}
Therefore, an  
agent's {\em subjective expected utility} given penalty $z$ is:
\begin{align}
	\uhat_i(z)  \triangleq \E{ (V_i\1 + v_i\2) \one{V_i\1 \geq -z - \betahat_i v_i\2}} - z \Pm{V_i\1< -z - \betahat_i v_i\2}.
	\label{equ:exp_util_hat}	
\end{align}
We call $\uhat_i(z)$ the \emph{subjective expected utility function}. For sophisticated agents who are able to perfectly predict their future decisions (i.e. $\betahat_i = \beta_i$),  $\uhat_i(z)$  and $u_i(z)$ coincide.


We assume that if allocated, agents' decisions in period~1 are influenced by their present bias, but are otherwise rational. The interesting question is to study an agent's incentives regarding reports in period~$0$, which are made based on subjective expected utility  $\uhat_i(z) - y$.
%
%
For any vector $g = (g_1, \dots, g_n)$ and any $i \in N$, we denote $g_{-i} \triangleq (g_1, \dots, g_{i-1}, g_{i+1}, \dots, g_n)$.

\begin{definition}[Dominant strategy equilibrium] 
A two-period mechanism has a {\em dominant strategy equilibrium} (DSE) if for each agent $i \in N$, for any type $\theta_i$ satisfying (A1)-(A3), there exists a report $\report^\ast_i \in \reportSet$ such that $\forall \report_{i} \in \reportSet, ~\forall \report_{-i} \in \reportSet^{n-1}$, 
\begin{align*}
	x_i(\report^\ast_i, ~ \report_{-i}) \uhat_i(\tone_i(\report^\ast_i, ~ \report_{-i})) - \tzero_i(\report^\ast_i, ~ \report_{-i}) \geq x_i(\report_i, ~ \report_{-i}) \uhat_i(\tone_i(\report_i, ~ \report_{-i})) - \tzero_i(\report_i, ~ \report_{-i}).
\end{align*}
\end{definition}



Let $\report^\ast(\theta) = (\report^\ast_1, \dots, \report^\ast_n)$ denote a report profile under a DSE given type profile $\theta$.
\begin{definition}[Voluntary participation] 
A two-period mechanism satisfies {\em voluntary participation} (VP) if for each agent $i \in N$, for any type $\theta_i$ satisfying (A1)-(A3), and any report profile $\report_{-i} \in \reportSet^{n-1}$, 
\begin{align*}
	x_i(\report^\ast_i, ~ \report_{-i}) \uhat_i(\tone_i(\report^\ast_i, ~ \report_{-i})) - \tzero_i(\report^\ast_i, ~ \report_{-i}) \geq 0.
\end{align*}
\end{definition}

Voluntary participation requires that each agent has non-negative subjective expected utility under her dominant strategy, given that she makes present-biased but otherwise rational decisions in period $1$ if allocated, regardless of the reports made by the rest of the agents.
%
%
Voluntary participation allows an agent to have negative utility at the end of period 1. 
\if 0
We cannot charge unallocated agents without violating VP, thus
$\tzero_i(\report) \leq 0$ for all $i \notin A$, for all report profiles $\report \in \reportSet^n$.
\fi

\if 0
The expected revenue of a two-period mechanism $\mech$ from an
assigned agent $i \in A$ is the total expected payment made  by the agent to the mechanism in DSE, assuming present-biased but otherwise rational decisions of agents in period $1$:
\begin{align}
	\rev_i(\theta) \triangleq &
	\tzero_i(\report^\ast) + \tone_{i}(\report^\ast) \Pm{ V_{i}\1 < - \tone_{i}(\report^\ast)  - \beta_i v_i\2}.	\label{equ:rev}
\end{align}
\fi


The expected revenue of a two-period mechanism $\mech$ is the total
expected payment made by the agents 
in the DSE, assuming present-biased but otherwise rational decisions in period $1$:
\begin{align}
	\rev_\mech(\theta) \triangleq &
	\sum_{i \in N } \left( \tzero_i(\report^\ast) + 
	x_i(\report)\tone_{i}(\report^\ast) \Pm{ V_{i}\1 < - \tone_{i}(\report^\ast)  - \beta_i v_i\2}\right).  	
\end{align}

\begin{definition}[No deficit] A two-period mechanism satisfies {\em no deficit} (ND) if, for any type profile $\theta$ that satisfies (A1)-(A3), the expected revenue is non-negative: $\rev_\mech(\theta) \geq 0$.
\end{definition}

\if 0
We also consider two additional properties:
A mechanism is \emph{anonymous} if the outcome (assignment, payments) is invariant to permuting the identities of agents. 
A mechanism is \emph{deterministic} if the outcome is not randomized unless there is a tie. 
%
%
\fi

The \emph{utilization} achieved by mechanism $\mech$ 
is the expected number of resources used by the assigned agents in the DSE: 
%
%
\begin{align}
	\ut_\mech(\theta) \triangleq \sum_{i \in N} x_i(\report^\ast) \Pm{V_{i}\1 \geq - \tone_i (\report^\ast) - \beta_i v_i\2}. 
\end{align}


The expected \emph{social welfare} achieved by mechanism $\mech$ is the total expected value derived by agents from using the resources:
\begin{align}
	\sw_\mech(\theta) \triangleq \sum_{i \in N} x_i(\report^\ast) \E{ (V_i\1 + v_i\2) \one{V_i\1 \geq - \tone_i(\report^\ast)- \beta_i v_i\2} }.   \label{equ:social_welfare}
\end{align}

Our objective is to design mechanisms that maximize expected social
welfare. We do not consider monetary transfers as part of the social welfare function. The reason $\tone_i(\report^\ast)$ appears in \eqref{equ:social_welfare} is that it affects decisions of the allocated agents in period $1$.

\section{The Two-Bid Penalty Bidding Mechanism} \label{sec:two_bid_csp}

In this section, we introduce the two-bid penalty bidding mechanism, and prove that agents have simple dominant strategies, regardless of their value distributions, levels of present bias, or degrees of sophistication.

\begin{definition}[Two-bid penalty bidding mechanism] \label{def:two_bid_penalty_bidding}
The \emph{two-bid penalty bidding mechanism} (2BPB) collects bids $\bmax = (\bmax_i, \dots, \bmax_n)$ from agents in period~0, 
%
and reorders  agents in decreasing order of $\bmax_i$ such that $\bmax_1 \geq \bmax_2 \geq \dots \geq \bmax_n$ (breaking ties randomly).
\begin{enumerate}[$\bullet$]
	\setlength\itemsep{0em}
	\item Allocation rule: $x_{i}(b) = 1$ for $i \leq m$, $x_i(b) = 0$ for $i > m$.
	\item Payment rule: the mechanism announces $\bmax_{m+1}$, elicits a second bid $\bmin_i \geq \bmax_{m+1}$ from each assigned agent $i \leq m$, and sets $\tone_i(b) = \bmin_i$. $\tone_i(b) = 0$ for all $i > m$, and $\tzero_i(b) = 0$ for all $i \in N$.
\end{enumerate}
\end{definition}


The 2BPB mechanism first asks agents to bid on the maximum penalties they are willing to accept for the option to use the resource for free, and assigns the resources to the highest bidders. The mechanism then asks each assigned agent to bid a penalty that is weakly higher than the $m+1\th$ bid, representing the amount she would like 
to be charged in case of a no-show.\footnote{Instead of using two rounds of bidding, we may also consider a direct revelation mechanism, where agents report their private information $\thetahat_i$, 
with which the mechanism determines the assignment and the contingent payments.
%
}

To establish the dominant strategy equilibrium under the 2BPB mechanism, we first prove some useful properties of agents' subjective expected utility function $\uhat_i(z)$.

\begin{restatable}{lemma}{lemmaExpUtility} \label{lem:exp_u}  
Given an agent with any type $\theta_i$ that satisfies (A1)-(A3), 
the agent's subjective expected utility $\uhat_i(z)$ as a function of the penalty $z$ satisfies:
\begin{enumerate}[(i)]
	\setlength\itemsep{0em}
	\item $\uhat_i(0) \geq 0$, and $\lim_{z \rightarrow +\infty} \uhat_i(z) \leq \E{V_i}$.
	\item $\uhat_i(z)$ is right continuous and upper-semi-continuous.
	\if 0 i.e. $\lim_{z \downarrow z^\ast} \uhat_i(z) = \uhat_i(z^\ast)$ for all $ z^\ast \geq 0$. Moreover, 
	$\uhat_i(z)$ is upper semi-continuous, meaning that for all $z^\ast \geq 0$, $\lim_{z \uparrow z^\ast} \uhat_i(z) \leq \uhat_i(z^\ast).$ 
	\fi 
\end{enumerate}
\end{restatable}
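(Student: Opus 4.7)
The plan for (i) is direct: at $z=0$ the penalty term in \eqref{equ:exp_util_hat} vanishes, and on $\{V_i\1 \geq -\betahat_i v_i\2\}$ we have $V_i\1 + v_i\2 \geq (1-\betahat_i) v_i\2 \geq 0$ (since $\betahat_i \leq 1$ and $v_i\2 \geq 0$), so the integrand is non-negative and $\uhat_i(0) \geq 0$. For the limit, since $z\,\Pm{V_i\1 < -z - \betahat_i v_i\2} \geq 0$, one has $\uhat_i(z) \leq \E{(V_i\1+v_i\2)\one{V_i\1 \geq -z - \betahat_i v_i\2}}$; splitting the integrand into positive and negative parts and applying monotone convergence as $z \to \infty$ (with (A2) ensuring $\E{V_i^+} < \infty$), the right side converges to $\E{V_i} \in [-\infty, \infty)$.

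For (ii), write $h(z) \triangleq -z - \betahat_i v_i\2$. For right continuity, fix $z^*$ and take $z_n \downarrow z^*$, so $h(z_n) \uparrow h(z^*)$. A case analysis shows $\one{V_i\1 \geq h(z_n)} \to \one{V_i\1 \geq h(z^*)}$ pointwise, and the sets $\{V_i\1 < h(z_n)\}$ increase to $\{V_i\1 < h(z^*)\}$. To interchange limit and expectation, I would dominate the positive part of $(V_i\1+v_i\2)\one{V_i\1 \geq h(z_n)}$ by $V_i^+$ (integrable by (A2)) and bound the negative part uniformly by $z_1$, using the key inequality $V_i\1 + v_i\2 \geq -z_n + (1-\betahat_i) v_i\2 \geq -z_1$ on the support of the indicator. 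Dominated convergence, continuity of probability from below, and $z_n \to z^*$ together yield $\uhat_i(z_n) \to \uhat_i(z^*)$.

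For upper semi-continuity, take $z_n \uparrow z^*$, so $h(z_n) \downarrow h(z^*)$; similarly $\one{V_i\1 \geq h(z_n)} \to \one{V_i\1 > h(z^*)}$ pointwise, and $\Pm{V_i\1 < h(z_n)} \to \Pm{V_i\1 \leq h(z^*)}$. The same dominated/bounded convergence argument gives $\lim_{z_n \uparrow z^*} \uhat_i(z_n) = \E{(V_i\1+v_i\2)\one{V_i\1 > h(z^*)}} - z^*\,\Pm{V_i\1 \leq h(z^*)}$. Only the atom $\{V_i\1 = h(z^*)\}$ distinguishes this from $\uhat_i(z^*)$; on that atom $V_i\1 + v_i\2 = -z^* + (1-\betahat_i)v_i\2$, and a short algebraic simplification gives $\uhat_i(z^*) - \lim_{z_n \uparrow z^*} \uhat_i(z_n) = (1-\betahat_i)v_i\2\,\Pm{V_i\1 = h(z^*)} \geq 0$, establishing $\uhat_i(z^*) \geq \lim_{z_n \uparrow z^*} \uhat_i(z_n)$ and hence upper semi-continuity.

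The principal obstacle throughout is the possible unboundedness of $V_i\1$ from below (as in Example~\ref{ex:vipi}, where $V_i\1 = -\infty$ with positive probability), which precludes a direct appeal to dominated convergence with $|V_i|$ as an envelope. The key observation that resolves this is that the indicator $\one{V_i\1 \geq h(z)}$ automatically zeroes out such realizations and, on its support, pins the lower envelope of $V_i\1 + v_i\2$ at $-z + (1-\betahat_i)v_i\2$, enabling bounded convergence on the negative part of the integrand; the non-negative atom contribution $(1-\betahat_i)v_i\2\,\Pm{V_i\1 = h(z^*)}$ is then precisely what prevents $\uhat_i$ from being left continuous when $\betahat_i < 1$.
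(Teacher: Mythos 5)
The gap is in the limit claim of part (i). After discarding the nonpositive term $-z\,\Pm{V_i\1 < -z-\betahat_i v_i\2}$ you are left with the upper bound $\E{(V_i\1+v_i\2)\one{V_i\1\geq -z-\betahat_i v_i\2}}$, and you assert that this converges to $\E{V_i}$. That is true when $V_i\1$ is almost surely real-valued, because then the events $A_z = \{V_i\1 \geq -z - \betahat_i v_i\2\}$ increase to the whole space and monotone convergence applies to each signed part. But the paper explicitly allows $V_i\1$ to place mass at $-\infty$ (Example~\ref{ex:vipi}, the $(c_i,p_i)$ model), and there $\bigcup_z A_z = \{V_i\1 > -\infty\}$ omits an atom of probability $1-p_i$. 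Concretely, for $z \geq c_i - \betahat_i w_i$ your upper bound equals the constant $(w_i-c_i)p_i > 0$, whereas $\E{V_i} = -\infty$; the bound therefore cannot yield $\lim_{z\to\infty}\uhat_i(z) \leq \E{V_i}$, and the true divergence $\uhat_i(z) = (w_i-c_i)p_i - (1-p_i)z \to -\infty$ is carried entirely by the penalty term you dropped. This is precisely the type model you name as the principal obstacle, but your fix for it addresses only part (ii).

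The paper sidesteps this by rewriting $\uhat_i(z) = \E{\max\{V_i\1 + \betahat_i v_i\2, -z\}} + (1-\betahat_i)v_i\2\,\Pm{V_i\1+\betahat_i v_i\2 \geq -z}$, which keeps $-z$ inside the expectation: the first term decreases to $\E{V_i\1} + \betahat_i v_i\2$ (possibly $-\infty$) by monotone convergence, and the second is bounded by $(1-\betahat_i)v_i\2$. Your argument can be repaired by retaining $-z\,\Pm{A_z^c}$ and bounding the two terms jointly, or by adopting this identity. Your part (ii), by contrast, is correct and takes a genuinely different route from the paper, which reads continuity of $\E{\max\{V_i\1+\betahat_i v_i\2,-z\}}$ and right/upper-semi-continuity of the probability term off the same identity; your direct dominated-convergence argument, using the indicator to pin the negative part of the integrand at $-z_1$, is more self-contained and has the added benefit of exhibiting the left-limit defect $(1-\betahat_i)v_i\2\,\Pm{V_i\1 = -z^*-\betahat_i v_i\2}$ explicitly.
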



\begin{proof} 

We first prove part (i). $\uhat_i(0) \geq 0$ holds given \eqref{equ:exp_util_hat} and the fact that $\betahat_i \leq 1$ and $w_i \geq 0$. For the limit as $z \rightarrow +\infty$, observe that $\uhat_i(z)$ can be rewritten as:
\begin{align}
	 \uhat_i(z) = & \E{ (V_i\1 + \betahat_i v_i\2) \one{(V_i\1+\betahat_i v_i\2) \geq -z }} +  \notag \\
	& (1 - \betahat_i) v_i\2 \Pm{(V_i\1+\betahat_i v_i\2) \geq -z}  - z \Pm{(V_i\1 + \betahat_i v_i\2)< -z }  \notag  \\
	=& \E{\max\{V_i\1 + \betahat_i v_i\2, ~-z\}} + (1 - \betahat_i) v_i\2 \Pm{(V_i\1+\betahat_i v_i\2) \geq -z}. \label{equ:uhat_rewritten}
\end{align}
By the monotone convergence theorem, the first term of \eqref{equ:uhat_rewritten} converges to $ \E{V_i\1 + \betahat_i v_i\2} = \E{V_i\1} + \betahat_i v_i\2$ as $z \rightarrow +\infty$. The second term is upper bounded by $(1 - \betahat_i) v_i\2$, therefore we get $\lim_{z \rightarrow +\infty} \uhat_i(z) \leq \E{V_i\1} + v_i\2 = \E{V_i}$. 

\smallskip

For part (ii), $\max\{V_i\1 + \betahat_i v_i\2, ~-z\}$ is a continuous function in $z$, therefore its expectation $\E{\max\{V_i\1 + \betahat_i v_i\2, ~-z\}}$ is also continuous in $z$. 
$\Pm{(V_i\1+\betahat_i v_i\2) \geq -z}$ is right continuous, implying the right continuity of $\uhat_i(z)$. The upper semi-continuity (i.e. $\lim_{z \uparrow z^\ast} \uhat_i(z) \leq \uhat_i(z^\ast)$ for all $z^\ast \geq 0$) holds because of the fact that $(1 - \betahat_i) v_i\2 \geq 0$, and that $\Pm{(V_i\1+\betahat_i v_i\2) \geq -z}$ is upper semi-continuous. 
%
\end{proof}

\citet{ma2019contingent} had earlier proved that for a rational agent without present bias, her expected utility as a function of the penalty 
is continuous, convex, and monotonically decreasing. These properties no longer hold for present-biased agents, since a higher penalty may incentivize an agent to use the resource more optimally, resulting in a higher expected utility.

\smallskip

For any penalty $z$, we define $\Uhat_i(z)$ as agent $i$'s highest subjective expected utility 
for the best choice of penalty, assuming this penalty 
must be at least $z$:
\begin{align}
	\Uhat_i(z) = \sup_{z' \geq z} \uhat_i(z'). \label{equ:max_U}
\end{align}


The following lemma 
proves the continuity and monotonicity of $\Uhat_i(z)$, together with the existence of a zero-crossing for $\Uhat_i(z)$.
This zero-crossing point is the maximum penalty an agent will accept,
in the case that this agent can choose to be charged any
penalty weakly larger than this penalty.

\begin{restatable}{lemma}{lemmaMaxExpUtility} \label{lem:max_exp_u}  
Given any agent with type $\theta_i$ that satisfies (A1)-(A3), 
the agent's subjective expected utility $\Uhat_i(z)$ as a function of the minimum penalty $z$ satisfies:
\begin{enumerate}[(i)]
	\setlength\itemsep{0em}
	\item $\Uhat_i(z)$ is continuous and monotonically decreasing in $z$.
	\item There exists a zero-crossing $\zc_i$ s.t. $\Uhat_i(\zc_i) = 0$ and $\Uhat_i(z) < 0$ for all $z > \zc_i$. 
\end{enumerate}
\end{restatable}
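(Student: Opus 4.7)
The plan is to leverage the three properties of $\uhat_i$ established in Lemma~\ref{lem:exp_u}: non-negativity at $0$, a strictly negative limit at $\infty$ (via assumption (A3)), and the joint right-continuity / upper-semi-continuity. Part (i)'s monotonicity is immediate from the definition: $z\mapsto\{z'\geq z\}$ is nested and shrinking, so the supremum is non-increasing. The real work is continuity of $\Uhat_i$, and then (ii) follows quickly from the intermediate value theorem.

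For \emph{right continuity} at a point $z$, I would take a sequence $z_n\downarrow z$. Monotonicity gives $\Uhat_i(z_n)\leq\Uhat_i(z)$. For the reverse direction, I would split into two cases. If for every $\epsilon>0$ there is some $z^\ast>z$ with $\uhat_i(z^\ast)>\Uhat_i(z)-\epsilon$, then eventually $z_n<z^\ast$ and $\Uhat_i(z_n)\geq \uhat_i(z^\ast)$. Otherwise the sup is realized only at $z$ itself, i.e.\ $\uhat_i(z)=\Uhat_i(z)$; then right-continuity of $\uhat_i$ gives $\uhat_i(z_n)>\uhat_i(z)-\epsilon$ for $n$ large, and since $\Uhat_i(z_n)\geq \uhat_i(z_n)$, the bound follows.

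For \emph{left continuity}, take $z_n\uparrow z$. Monotonicity yields $\Uhat_i(z_n)\geq \Uhat_i(z)$. The key observation is
\[
\Uhat_i(z_n)=\max\!\Bigl\{\Uhat_i(z),\ \sup_{z'\in[z_n,z)}\uhat_i(z')\Bigr\}.
\]
Upper-semi-continuity of $\uhat_i$ at $z$ provides, for any $\epsilon>0$, a $\delta>0$ such that $\uhat_i(z')<\uhat_i(z)+\epsilon\leq \Uhat_i(z)+\epsilon$ for every $z'\in(z-\delta,z)$. Once $z_n>z-\delta$, the interval $[z_n,z)$ lies in $(z-\delta,z)$, so the second argument of the max is at most $\Uhat_i(z)+\epsilon$, yielding $\Uhat_i(z_n)\leq \Uhat_i(z)+\epsilon$. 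Letting $n\to\infty$ and then $\epsilon\to 0$ completes the argument. I expect this upper-semi-continuity step to be the main subtlety, since $\uhat_i$ itself can jump (it is only upper-semi-continuous, not continuous), and we must rule out that these jumps propagate into $\Uhat_i$.

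For part (ii), Lemma~\ref{lem:exp_u}(i) gives $\Uhat_i(0)\geq \uhat_i(0)\geq 0$, while the standard identity $\inf_{z\geq 0}\sup_{z'\geq z}\uhat_i(z')=\limsup_{z\to\infty}\uhat_i(z)$, combined with $\lim_{z\to\infty}\uhat_i(z)\leq \E{V_i}<0$ from (A3), implies $\lim_{z\to\infty}\Uhat_i(z)<0$. Defining $\zc_i\triangleq\sup\{z\geq 0:\Uhat_i(z)\geq 0\}$, the set is non-empty (contains $0$) and bounded above, so $\zc_i$ is well-defined and finite. Continuity from part (i) then forces $\Uhat_i(\zc_i)=0$: approaching $\zc_i$ from the left gives a non-negative limit, and from the right a non-positive one. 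By the definition of the supremum together with monotonicity, $\Uhat_i(z)<0$ for every $z>\zc_i$, as claimed.
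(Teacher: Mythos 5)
Your proof is correct and follows essentially the same route as the paper's: monotonicity from nested suprema, continuity from the regularity properties of $\uhat_i$ in Lemma~\ref{lem:exp_u}, and the zero-crossing from $\Uhat_i(0)\geq \uhat_i(0)\geq 0$, the negative limit at infinity via (A3), and the intermediate value property. You are in fact more careful than the paper, whose proof attributes continuity of $\Uhat_i$ solely to the right continuity of $\uhat_i$; your observation that left continuity additionally requires the upper semi-continuity of $\uhat_i$ (to prevent upward jumps from the left from propagating into the supremum) is the correct and complete justification.
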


\begin{proof} 
For part (i), the monotonicity of $\Uhat_i(z)$ is obvious, and the continuity 
is implied by the right continuity of $\uhat_i(z)$ as shown in Lemma~\ref{lem:exp_u}. 
For part (ii), Lemma~\ref{lem:exp_u} and assumption (A3) imply $\lim_{z \rightarrow \infty} \uhat_i(z) \leq \E{V_i} < 0$. Therefore, there exists $Z \in \setR$ s.t. $\uhat_i(z) < 0$ for all $z \geq Z$. As a result, $\Uhat_i(z) < 0$ holds for all $z \geq Z$, 
and the monotonicity and continuity of $\Uhat_i(z)$ then imply that the following supreme exists: 
\begin{align*}
	\zc_i \triangleq \sup\{ z \in \setR ~|~ \Uhat_i(z) \geq 0\}, 
\end{align*}
and that we must have $\Uhat_i(\zc_i) = 0$ and $\Uhat_i(z) < 0$ for all $z > \zc_i$. 
\end{proof}


The following example illustrates the expected utility functions of an agent with $(c_i, p_i)$ type (see Example~\ref{ex:vipi}), and shows that there may not exist a DSE under the CSP mechanism.

\begin{example} \label{ex:vipi_expected_utility} Consider a sophisticated agent whose type follow the $(c_i, p_i)$ model, 
who is assigned a resource and charged no-show penalty $z$. With probability $1 - p_i$, the agent is not able to use the resource, and has to pay the penalty. With probability $p_i$, the agent is able to use the resource at a cost of $c_i$, but will use the resource if and only if $\beta_i w_i - c_i \geq -z \Leftrightarrow z \geq c_i - \beta_i w_i$. Therefore, the agent's expected utility as a function of the no-show penalty is of the form: 
\begin{align*}
	u_i(z) = \pwfun{-z, & \txtif  z < c_i - \beta_i w_i, \\ 
		(w_i - c_i)p_i - (1-p_i)z, & \txtif z \geq c_i - \beta_i w_i,}
\end{align*}
and $\uhat_i(z) = u_i(z)$ holds for all $z \geq 0$ since the agent is sophisticated. 
Figure~\ref{fig:vipi_util_u} illustrates $\uhat_i(z)$ for an agent with $c_i - \beta_i w_i > 0$. Intuitively, $c_i - \beta_i w_i$ is the minimum penalty the agent needs to be charged so that she will use the resource when she is able to. When $z < c_i - \beta_i w_i$, the agent ends up always paying the penalty, which is too small to incentivize utilization. $\Uhat_i(z)$ of this agent is as shown in Figure~\ref{fig:vipi_util_U}. The maximum penalty the agent is willing to accept is $\zc_i = (w_i - c_i)p_i/(1-p_i)$.

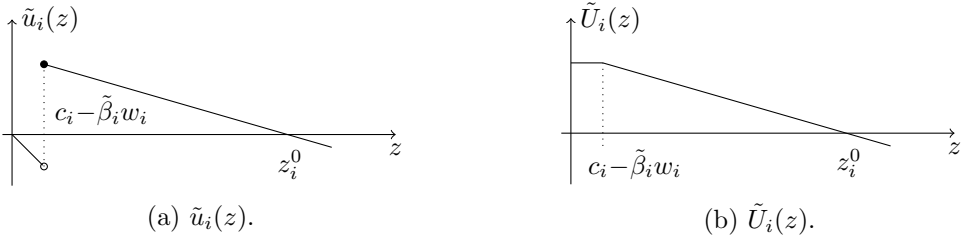
\begin{figure}[t!]
\centering
\begin{subfigure}[t]{0.45\textwidth}
	\centering
\begin{tikzpicture}[scale = 0.85][font = \small]

\draw[->] (-0.15,0) -- (6,0) node[anchor=north] {$z$};

\draw[->] (0,-0.8) -- (0, 1.8) node[anchor=west] {$\uhat_i(z)$};

\draw[-] (0, 0) -- (0.48,-0.48);
\draw[-] (0.5, 1.1) -- (5,-0.2);

\draw[dotted]	(0.5,-0.5) -- (0.5, 1.1);

\draw (0.5,-0.5) circle (1.5pt);
\filldraw [black] (0.5, 1.1) circle (1.5pt);

\draw	(1.4, 0) node[anchor=south] {$c_i \hspace{-0.2em} - \hspace{-0.2em} \betahat_i w_i$}
		(4.35,-0.1) node[anchor=north] {$\zc_i$};
\end{tikzpicture}

\caption{$\uhat_i(z)$. \label{fig:vipi_util_u}}
\end{subfigure}%
\begin{subfigure}[t]{0.45\textwidth}
	\centering
\begin{tikzpicture}[scale = 0.85][font=\small]

\draw[->] (-0.15,0) -- (6,0) node[anchor=north] {$z$};

\draw[->] (0,-0.8) -- (0, 1.8) node[anchor=west] {$\Uhat_i(z)$};

\draw[-] (0, 1.1) -- (0.5, 1.1) -- (5,-0.2);

\draw[dotted]	(0.5,-0.2) -- (0.5, 1.1);

\draw	(1,-0.1) node[anchor=north] {$c_i \hspace{-0.2em} - \hspace{-0.2em} \betahat_i w_i$}
		(4.35,-0.1) node[anchor=north] {$\zc_i$};
		

\end{tikzpicture}
\caption{$\Uhat_i(z)$. \label{fig:vipi_util_U}}
\end{subfigure}%
\caption{Subjective expected utility functions of a sophisticated agent with $(c_i, p_i)$ type, with $c_i - \betahat_i w_i > 0$.  \label{fig:vipi_example_utilities}} 
\end{figure}

There is no dominant strategy for this agent under the CSP mechanism. Consider the assignment of a single resource. If the highest bid among the rest of the agents satisfies $\max_{i' \neq i} b_{i'} \in [c_i - \betahat_i w_i, \zc_i)$, the  agent gets positive utility from bidding $b_i = \zc_i$, getting allocated and charged penalty $\max_{i' \neq i} b_{i'}$. However, if $\max_{i' \neq i} b_{i'} < c_i - \betahat_i w_i$, bidding $b_i = \zc_i$ results in negative utility--- the agent will be allocated, but charged a penalty that is too small to overcome her present bias. In this case, the agent is better off bidding $b_i = 0$ and get zero utility. 
\qed
\end{example}

We now state and prove the main theorem of this paper.

\begin{restatable}[Dominant strategy equilibrium of the two-bid
  penalty bidding mechanism]{theorem}{thmDSE} \label{thm:dse_two_bid_mech}  
Assuming (A1)-(A3), under the two-bid penalty bidding mechanism, it is a dominant strategy for each agent $i \in N$ to bid $\bmax_i^\ast = \zc_i$. If agent $i$ is assigned a resource and given a minimum penalty $\underline{z}$, it is then a dominant strategy to bid $\bmin_i^\ast = \arg\max_{z \geq \underline{z}} \uhat_i(z)$. Moreover, the mechanism satisfies voluntary participation and no deficit. 
\end{restatable}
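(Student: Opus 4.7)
The plan is to work backwards through the two stages of the mechanism, establishing optimal play in the second stage first, then reducing the first stage to a familiar second-price auction argument.

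First I would handle the second-stage (penalty) report. When agent $i$ is assigned and has been told the minimum allowable penalty $\underline{z}=\bmax_{m+1}$, her realized objective (given present-biased but otherwise rational period-1 play as she subjectively predicts) is exactly $\uhat_i(\bmin_i) - \tzero_i = \uhat_i(\bmin_i)$, since base payments are zero. Maximizing over $\bmin_i \geq \underline{z}$ defines $\Uhat_i(\underline{z})$ by \eqref{equ:max_U}. I would briefly note that the supremum is attained: by Lemma~\ref{lem:exp_u}(i) and (A3), $\uhat_i(z) \to \E{V_i}<0$ as $z\to\infty$, so the search can be restricted to a compact interval on which the upper-semi-continuous function $\uhat_i$ attains its maximum. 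Hence $\bmin_i^\ast = \arg\max_{z\ge\underline{z}}\uhat_i(z)$ is well-defined and optimal.

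Next I would reduce the first stage to a single-dimensional bidding problem. Given the above, whenever agent $i$ is allocated with cutoff $c$ (the $(m+1)\th$ bid), her resulting subjective utility is precisely $\Uhat_i(c)$; whenever she is not allocated, her utility is $0$. By Lemma~\ref{lem:max_exp_u}, $\Uhat_i$ is continuous, weakly decreasing, and crosses zero at $\zc_i$. So the agent would like to be allocated iff the cutoff satisfies $c\leq \zc_i$. This is exactly the structure of a second-price auction with ``value'' $\zc_i$, and the standard weak-dominance argument applies: I would verify the three cases $\bmax_i>\zc_i$, $\bmax_i<\zc_i$, and $\bmax_i=\zc_i$, showing in each case that deviating from $\zc_i$ can only change the outcome when the cutoff $c=\max_{j\neq i}\text{-th order}$ lies between $\bmax_i$ and $\zc_i$; in that range, deviating either forces allocation at a penalty $c$ giving negative $\Uhat_i(c)$, or forgoes allocation at a penalty $c$ giving positive $\Uhat_i(c)$, while truthfully bidding $\zc_i$ yields the better outcome. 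Tie-breaking at $c=\zc_i$ is harmless because $\Uhat_i(\zc_i)=0$.

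Voluntary participation follows immediately: when agent $i$ bids $\zc_i$ and is not allocated, utility is $0$; when she is allocated the cutoff $c$ satisfies $c\leq \zc_i$, so by monotonicity her utility is $\Uhat_i(c)\geq \Uhat_i(\zc_i)=0$. No deficit follows since base payments are zero and the penalty $\bmin_i^\ast\geq \bmax_{m+1}\geq 0$ (nonnegativity of all $\zc_j$ comes from $\Uhat_j(0)\geq \uhat_j(0)\geq 0$ via Lemma~\ref{lem:exp_u}(i)), so expected revenue is a sum of nonnegative terms.

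The main subtlety, and the only place the proof is more than bookkeeping, is justifying that the first-stage analysis is legitimate despite $\uhat_i$ itself being neither continuous nor monotone: the commitment option built into the second bid is exactly what lets us replace $\uhat_i$ with $\Uhat_i$, and it is the good behavior of $\Uhat_i$ (Lemma~\ref{lem:max_exp_u}) that restores the second-price-auction structure. I would make sure to emphasize this reduction clearly rather than glossing over it, since it is precisely the feature that distinguishes 2BPB from the single-bid CSP mechanism illustrated in Example~\ref{ex:vipi_expected_utility}.
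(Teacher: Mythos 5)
Your proof is correct and follows essentially the same route as the paper's: optimize the second-stage bid first (so that $\Uhat_i$ becomes the effective first-stage utility), then invoke the standard second-price dominance argument using the continuity and monotonicity of $\Uhat_i$ from Lemma~\ref{lem:max_exp_u}. If anything, you are more careful than the paper at two points it glosses over --- justifying that the supremum in \eqref{equ:max_U} is attained (upper semi-continuity plus restriction to a compact interval, since $\uhat_i$ is eventually negative by (A3)) and spelling out the voluntary-participation and no-deficit claims explicitly.
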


\begin{proof}

We first consider an agent who is assigned a resource and asked by the mechanism to bid an amount that is at least $\underline{z}$. The right continuity of $\uhat_i(z)$ (see Lemma~\ref{lem:exp_u}) implies that the highest utility $\Uhat_i(\underline{z})$ when the agent can choose any penalty weakly higher than $\underline{z}$ is achieved at $\arg \max_{z \geq \underline{z}} \uhat_i(z)$. 
Since whichever amount an agent bids as $\bmin_i$ will be the penalty she is charged by the mechanism, it is a dominant strategy to bid $\bmin_i^\ast = \arg \max_{z \geq \underline{z}} \uhat_i(z)$. 

Given that an assigned agent will get expected utility $\Uhat_i(\underline{z})$ when she is asked to choose a penalty $\bmin_i $ that is weakly above $\underline{z}$, $\Uhat_i(z)$ is effectively her expected utility function in the first round of bidding. With the monotonicity of $\Uhat_i(z)$ and the fact that the minimum penalty is determined by the $m+1\th$ highest bid, it is standard that an agent bids in DSE the highest ``minimum penalty to choose from'' that she is willing to accept, which is $\zc_i$.  
\if 0
We first consider an agent who is assigned a resource, and asked  by the mechanism to bid an amount that is at least $\underline{z}$. The highest expected utility $\Uhat_i(\underline{z})$ when the agent can choose any penalty weakly higher than $\underline{z}$ is achieved at some $\arg \max_{z \geq \underline{z}} \uhat_i(z)$ because of the right continuity of $\uhat_i(z)$. Since whichever amount an agent bids as $\bmin_i$ will be the penalty she is charged by the mechanism, it is a dominant strategy to bid $\bmin_i^\ast = \arg \max_{z \geq \underline{z}} \uhat_i(z)$. 

Given that an assigned agent will get expected utility $\Uhat_i(\underline{z})$ when she is asked to bid a penalty at least $\underline{z}$, $\Uhat_i(z)$ is effectively her expected utility function in the first round of bidding. With the monotonicity of $\Uhat_i(z)$ and the fact that the minimum penalty is determined by the $m+1\th$ highest bid, it is standard that an agent bids in DSE the highest ``minimum penalty to choose from'' that she is willing to accept, which is $\zc_i$. 
\fi
\end{proof}

The following example shows that the 2BPB mechanism can achieve better social welfare and utilization than the $m+1\th$ price auction by assigning to a ``better'' agent and charging a proper penalty as the commitment device. 

\begin{example} \label{ex:two_bid_better_than_sp} Consider the assignment of one resource to two sophisticated agents with $(c_i, p_i)$ types, where:
\begin{enumerate}[$\bullet$]
	\item $c_1 = 10$, $p_1 = 0.8$, $w_1 = 16$, $\beta_1 = \betahat_1 = 0.5$,
	\item $c_2 = 6$, $p_2 = 0.5$, $w_2 = 10$, $\beta_2 = \betahat_2 = 0.8$.
\end{enumerate}
When $z < c_1 - \beta_1 w_1 = 2$, agent $1$ never uses the resource. On the other hand, $c_2 - \beta_2 w_2 < 0$ means that agent~$2$ uses the resource with probability $p_2$ while facing any non-negative penalty. $u_i(z) = \uhat_i(z)$ for $i = 1,2$ since both agents are sophisticated, and the subjective expected utility functions of the two agents are as shown in Figure~\ref{fig:exmp_two_bid_better_than_SP}. 

\begin{figure}[t!]
\centering
\begin{tikzpicture}[scale = 1][font = \small]

\draw[->] (-0.15,0) -- (6,0) node[anchor=north] {$z$};

\draw[->] (0,-0.8) -- (0, 1.8) node[anchor=west] {$\uhat_i(z)$};

\draw[-] (0, 0) -- (0.48,-0.48);
\draw[-] (0.5, 1.1) -- (5,-0.2);

\draw[dashed]	(0,0.5) -- (1.5, -0.25);

\draw[dotted]	(0.5,-0.5) -- (0.5, 1.1);

\draw (0.5,-0.5) circle (1.5pt);
\filldraw [black] (0.5, 1.1) circle (1.5pt);

\draw	(1, -0.1) node[anchor=north] {$\zc_2$}
		(4.35,-0.1) node[anchor=north] {$\zc_1$};
		
\draw[-] (4.4,1.5) -- (5,1.5) node[anchor=west] {$\uhat_1(z)$};
\draw[dashed] (4.4,1) -- (5,1) node[anchor=west] {$\uhat_2(z)$};
		
\end{tikzpicture}
\caption{Expected utility functions of two agents in Example~\ref{ex:two_bid_better_than_sp}.  \label{fig:exmp_two_bid_better_than_SP}} 
\end{figure}
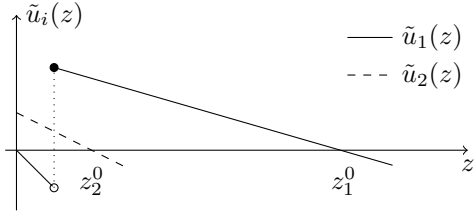

Under the second price auction, agents bid in DSE $b_{1, \txtSP}^\ast = \uhat_1(0) = 0$ and $b_{2, \txtSP}^\ast = \uhat_2(0) = (w_2 - c_2)p_2 = 2$, the values of the option to use the resource without any penalty (the free option to use the resource has no value to agent $1$ since she knows that she will never show up). Agent $2$ gets assigned the resource and charged no penalty, achieving social welfare $(w_2-c_2)p_2 = 2$ and utilization $p_2 = 0.5$.

Under the 2BPB mechanism, the agents bid in DSE $\bmax_1^\ast = \zc_1 = (w_1 - c_1)p_1/(1-p_1) = 24$, and $\bmax_2^\ast = \zc_2 = (w_2 - c_2)p_2/(1-p_2) = 4$. Agent~$1$ is therefore assigned and will bid $\bmin_1^\ast = 4$ when asked to choose a penalty weakly above $\bmax_2^\ast = 4$, since $\uhat_1(z)$ is monotonically decreasing in $z$ for $z \geq c_1 - \betahat_1 w_1 = 2$. The 2BPB mechanism achieves social welfare $(w_1-c_1)p_1 = 4.8$ and utilization $p_1 = 0.8$, both are higher than those under the second price auction.
\qed
\end{example}

\subsection{Discussion} \label{sec:discussion}

For fully rational agents with $\betahat_i = \beta_i = 1$, the subjective expected utility as a function of the penalty $\uhat_i(z)$ is monotonically decreasing in $z$, therefore $\uhat_i(z) = \Uhat_i(z)$ for all $z \in \setR$. In this case, the  equilibrium outcome under the 2BPB mechanism coincides with that under the $m+1\th$-price generalization of the CSP mechanism.

Since $\uhat_i(z)$ is what an agent considers while bidding, in period~0 a naive agent will bid as if she was rational with the same value distribution. In period~1, however, present bias will take effect, and the naive agent may make sub-optimal decisions. The actual expected utility a naive agent gets from participating in the $m+1\th$ price CSP or the 2BPB mechanisms, therefore, may be negative, despite the fact that she is willing to participate and believes she will get non-negative expected utility.

For two agents $i$ and $i'$ who are identical except that $\betahat_i > \betahat_{i'}$, we can prove that $\uhat_i(z) \geq \uhat_{i'}(z)$ holds for all $z \geq 0$. As a result, $\uhat_i(0) \geq \uhat_{i'}(0)$ and $\zc_i \geq \zc_{i'}$. This implies that an agent who believes that she is less present-biased  (i.e. with higher $\betahat_i$) will bid higher under both the 2BPB mechanism and the $m+1\th$ price auction. See Proposition~\ref{prop:bid_monotonicity} in Appendix~\ref{appx:bid_monotonicity} for more detailed discussions.

For rational agents without present bias, the CSP mechanism optimizes utilization among a large family of mechanisms with a set of desirable properties~\cite{ma2019contingent}. 
The 2BPB mechanism, however, does not provably optimize utilization for present-biased agents. The reason is that the actual present bias factor does not affect a naive agent's bid, and it is still possible for a very biased naive agent to be assigned
but rarely show up. On the other hand, the $m+1\th$ auction may not assign the resource to this agent, thus may achieve higher utilization and welfare (see Examples~\ref{exmp:two_bid_suboptimal_utilization_1} and~\ref{exmp:two_bid_suboptimal_utilization_2} in Appendix~\ref{appx:not_utilization_opt}).  


The 2BPB mechanism can also be generalized for assigning multiple heterogeneous resources $M = \{a, b, \dots, m\}$, 
where each agent $i \in N$ has a random value $V_{i,a} = V_{i,a}\1 + v_{i,a}\2$ for using each resource $a \in M$. $\uhat_{i,a}(z)$ and $\Uhat_{i,a}(z)$ can be defined similarly to \eqref{equ:exp_util_hat} and \eqref{equ:max_U}. The 2BPB mechanism can be generalized through the use of a minimum Walrasian equilibrium price mechanism, which computes the assignment and the minimum penalty each agent faces using $\{\Uhat_{i,a}(z)\}_{i \in N, a  \in M}$~\cite{demange1985strategy,DBLP:journals/ior/AlaeiJM16,ma2019contingent}. 
As a second step, each assigned agent is asked to report a weakly higher penalty that  she wants to be charged by the mechanism. 

\section{Simulation Results} \label{sec:simulations}

In this section, we adopt the exponential model (see Example~\ref{ex:exp_model}), and compare in simulation the social welfare and utilization achieved by different mechanisms and
benchmarks. 
Additional simulation results for the exponential model are presented in Appendix~\ref{appx:additional_simulations}, together with similar results when assuming the $(c_i, p_i)$ type model (see Example~\ref{ex:vipi}) or a uniform type model where agents' period~1 values are uniformly distributed.

For the exponential model, 
$\E{V_i} = -\lambda_i^{-1}+ \fixedV_i $, where $-\lambda_i^{-1}$ is the expected period~1 opportunity cost for using the resource.\footnote{The expected utility functions and dominant strategy bids under various type models are derived in Appendix~\ref{appx:derivations}.} 
We consider a type distribution in the population, where the value $w_i$ and the expected opportunity cost $\lambda_{i,a}^{-1}$ 
are uniformly distributed as $\lambda_{i}^{-1} \sim \mathrm{U}[0,~L]$ and $w_{i} \sim \mathrm{U}[0,~\lambda_{i}^{-1}]$. 
$w_{i} < \lambda_{i}^{-1}$ holds almost surely, thus assumptions (A1)-(A3) are satisfied. The results are not sensitive to the choices of $L$ in defining this type distribution, and we fix $L = 20$ for the rest of this section.

\subsection{Varying Resource Scarcity}  \label{sec:simulations_scarcity}

Fixing the number of resources at five, we study the impact of varying the scarcity of the resource, by varying the number of agents  from $2$ to $30$. 
We define the \emph{first best} as the highest achievable social welfare (or utilization) assuming full knowledge of agent types, and without violating voluntary participation or no deficit. 
The \emph{first-come-first-serve with fixed penalty mechanism} (FCFS) assumes a random order of arrival, with the effect of assigning to a random subset of at most $m$ agents who are willing to accept the penalty. 
We consider three levels of penalties for FCFS: 5, 2.5 and 0, where $5$ is equal to the expectation of the future value $w_i$.

\paragraph{Naive Agents} We first consider the scenario where all agents are naive. The present bias factor $\beta_i$ uniformly distributed on $[0,1]$, and all agents believe $\betahat_i = 1$. The average social welfare and utilization over 100,000 randomly generated profiles are as shown in Figure~\ref{fig:exp_naive_rand_beta}. 

\newcommand{\figScale}{0.85}

\begin{figure}[t!]
\centering
\begin{subfigure}[t]{0.45\textwidth}
	\centering
	\includegraphics[scale=\figScale]{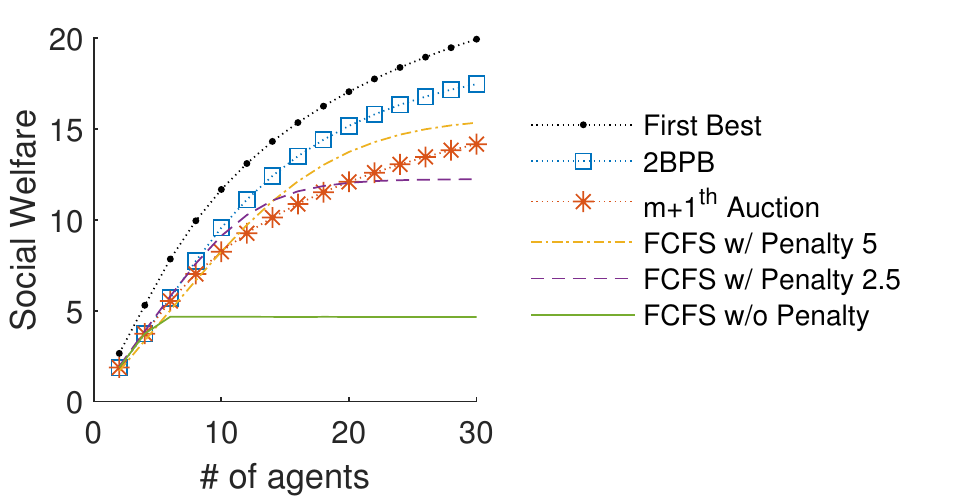}
	\caption{Social welfare. \label{fig:exp_naive_rand_beta_welfare}}
\end{subfigure}%
\hspace{2em}
\begin{subfigure}[t]{0.45\textwidth}
	\centering
 	\includegraphics[scale=\figScale]{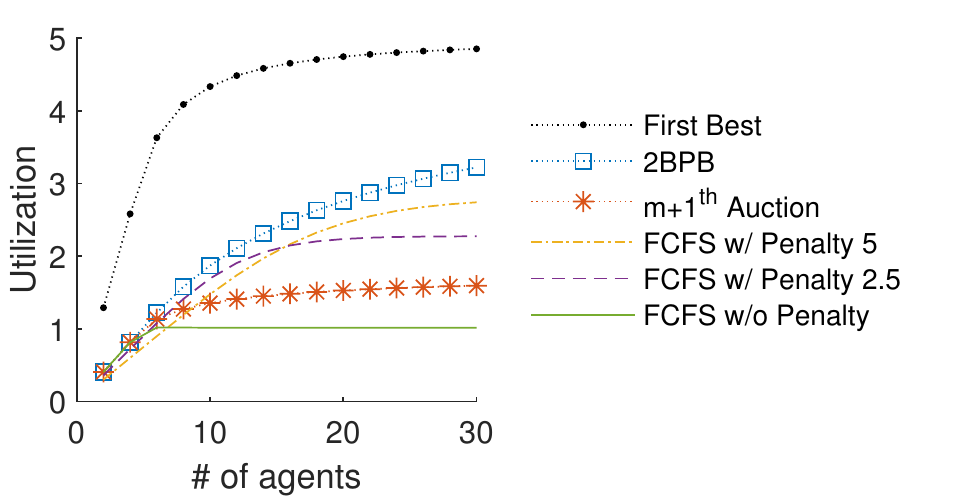}
	\caption{Utilization. \label{fig:exp_naive_rand_beta_utilization}}
\end{subfigure}%
\caption{Social welfare and utilization for naive agents with exponential types. 
\label{fig:exp_naive_rand_beta} 
}
\end{figure}

When the number of agents is small, the outcomes under 2BPB, the $m+1\th$ price auction, and the FCFS without penalty are similar, since all three effectively assign the resources to all agents, without charging any penalty. As the number of agents increases, the 2BPB mechanism achieves higher social welfare and substantially higher utilization than the $m+1\th$ price auction (which optimizes social welfare for rational agents without present bias), and does this without charging any payments from agents who do show up.

The 2BPB mechanism achieves higher welfare and utilization for economies of any size, and does not require any prior knowledge about the number of agents or their bias level or value distributions.
The FCFS mechanism (which are analogous to the reservation system widely used in practice), by comparison, requires careful adjustments of the fixed penalty level. A smaller penalty works fine when the number of agents is small but fails to keep up  as the economy becomes more competitive. A larger penalty outperforms the $m+1\th$ price auction for larger economies, but deters participation and leaves resources unallocated when the number of agents is small. 

\paragraph{Sophisticated Agents} Consider now fully sophisticated agents with $\betahat_i = \beta_i$, whose present-biased factors are distributed as $\beta_i \sim \mathrm{U}[0,1]$.
As the number of agents vary from $2$ to $30$, the average social welfare and utilization over 100,000 randomly generated economies are as shown in Figure~\ref{fig:exp_soph_rand_beta}. 

\begin{figure}[t!]
\centering
\begin{subfigure}[t]{0.45\textwidth}
	\centering
	\includegraphics[scale=\figScale]{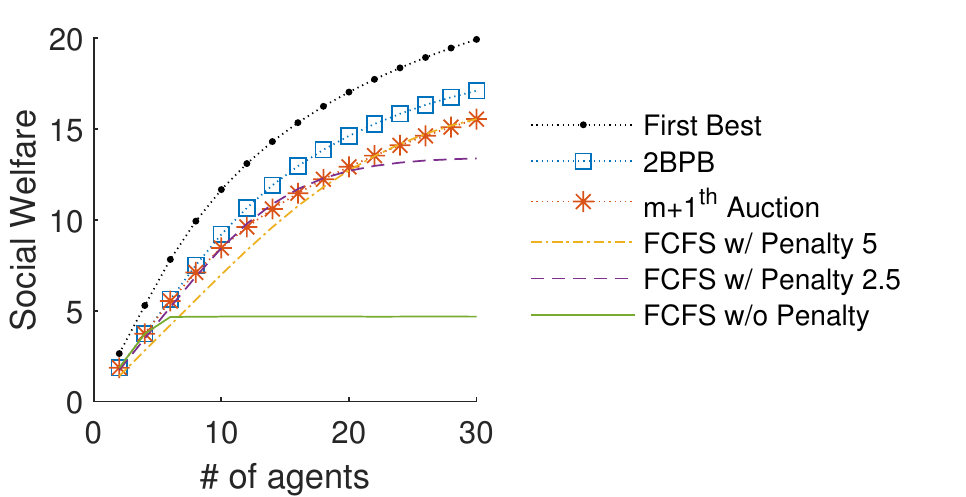}
	\caption{Social welfare. \label{fig:exp_soph_rand_beta_welfare}}
\end{subfigure}%
\hspace{2em}
\begin{subfigure}[t]{0.45\textwidth}
	\centering
 	\includegraphics[scale=\figScale]{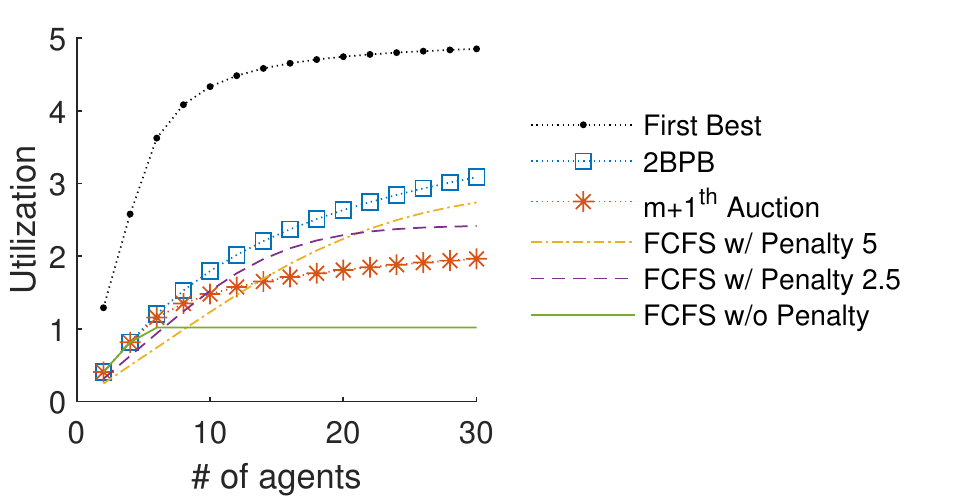}
	\caption{Utilization. \label{fig:exp_soph_rand_beta_utilization}}
\end{subfigure}%
\caption{Social welfare and utilization for sophisticated agents with exponential types. 
\label{fig:exp_soph_rand_beta} 
}
\end{figure}

As with the setting with naive agents, we can see that the 2BPB mechanism achieves higher welfare and utilization than the $m+1\th$ price auction, and that the performance of FCFS is very sensitive to the fixed penalty and the competitiveness of the economy. 
The $m+1\th$ price auction achieves higher welfare and utilization for sophisticated agents, in comparison to the setting with naive agents. This is because sophisticated agents are able to adjust their bids depending on their present bias level, and avoid the situation where a naive agent bids too much, gets assigned, but rarely show up, resulting in low utilization, welfare, and negative actual expected utility for the naive agent herself.

\smallskip

In Appendix\ref{appx:additional_simu_exp}, we present additional simulation results assuming all agents are fully rational ($\betahat_i = \beta_i = 1$) or partially naive (in which case we assume $\betahat_i \sim \mathrm{U}[\beta_i,~1]$).
The outcome for partially naive agents is between the outcome for fully naive agents and fully sophisticated agents. For 
rational agents, the 2BPB mechanism achieves slightly worse welfare than the $m+1\th$ price auction, which is provably optimal for this setting. The 2BPB mechanism, however, still achieves higher utilization and also a significantly better outcome than the FCFS benchmarks.

\subsection{Impact on Agents with Different Degrees of Bias} \label{sec:simulations_degree_of_bias}

In this section, we 
study the different outcomes for agents with different degrees of present bias. 
We assume the same type distribution 
as in the previous section, but fix the present-bias factor of each agent $i$ at $\beta_i = i/n$, where $n$ is the total number of agents--- the smaller an agent's index, the more present-biased an agent. 

\paragraph{Naive Agents} We first consider the scenario where all agents are naive. Fixing $n = 30$, for 1 million randomly generated economies, the average per economy welfare and usage (i.e. the probability of being assigned and showing up) of \emph{each agent} is as shown in Figure~\ref{fig:exp_naive_array_beta}. 
Under the first-best welfare and the first-best utilization, agents with different degrees of bias achieve the same welfare and utilization. This is because the agents all have the same distribution of $\lambda_i^{-1}$ and $w_i$, and only differ in their bias factor $\beta_i$. The full-information first best knows the exact types of agents, and adjusts the penalties accordingly, so that there is no difference between agents who are more or less biased. 
Note that naive agents behave in period~0 as if they were rational, therefore all agents bid in the same way despite their different degrees of bias, and therefore are assigned with the same probability.

\begin{figure}[t!]
\centering
\begin{subfigure}[t]{0.45\textwidth}
	\centering
	\includegraphics[scale=\figScale]{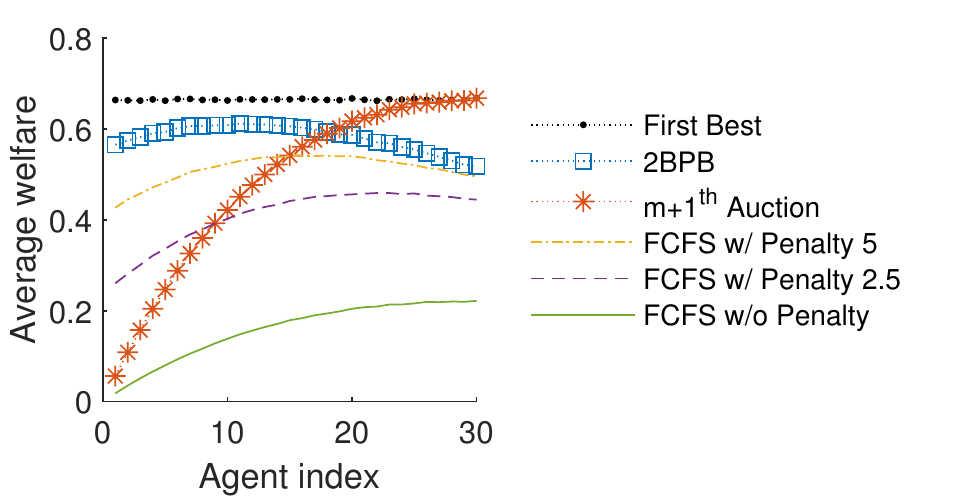}
	\caption{Average welfare. \label{fig:exp_naive_array_beta_welfare}}
\end{subfigure}%
\hspace{2em}
\begin{subfigure}[t]{0.45\textwidth}
	\centering
 	\includegraphics[scale=\figScale]{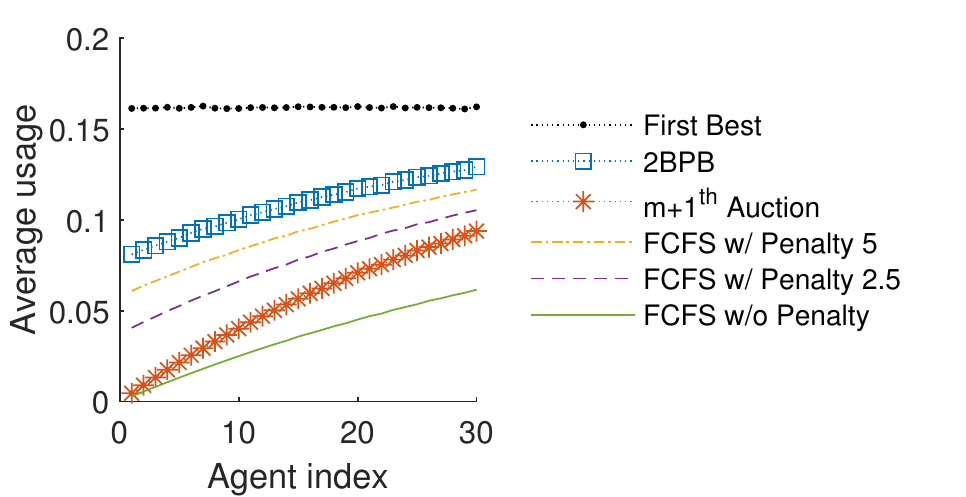}
	\caption{Average usage. \label{fig:exp_naive_array_beta_utilization}}
\end{subfigure}%
\caption{Average welfare and usage for naive agents with exponential types, fixing $\beta_i = i/n$. 
\label{fig:exp_naive_array_beta} 
}
\end{figure}

Figure~\ref{fig:exp_naive_array_beta_welfare} shows that the less biased agents (higher indices)  gain substantially higher welfare than the more biased agents (lower indices) under the $m+1\th$ price auction. By contrast, the 2BPB mechanism helps agents who are more biased to achieve substantially higher welfare than the outcome under the $m+1\th$ price auction, and at the same time slightly reducing the welfare for the least biased agents. This is because 
the least biased agents 
are able to make close to optimal decisions in period~$1$ by themselves, and charging a penalty sometimes leads to suboptimal utilization decisions.

From Figure~\ref{fig:exp_naive_array_beta_utilization}, we see that all agents have higher average usage under the 2BPB mechanism, 
and agents who are more biased achieve a higher gain 
in comparison with the $m+1\th$ price auction. 
Overall, the outcome under the 2BPB mechanism is subtantially more equitable for agents with all levels of bias. It is also worth noting that while naive agents do not see the value of commitment and  do not take any commitment device when offered~\cite{bryan2010commitment,beshears2011self}, the 2BPB mechanism is still able to help, 
since a commitment device is designed through the mechanism, and it is not an option to not accept a commitment.

\paragraph{Sophisticated Agents} For fully sophisticated agents with varying degrees of present bias, the average welfare and usage per economy of each agent are as shown in Figure~\ref{fig:exp_soph_array_beta}.

\begin{figure}[t!]
\centering
\begin{subfigure}[t]{0.45\textwidth}
	\centering
	\includegraphics[scale=\figScale]{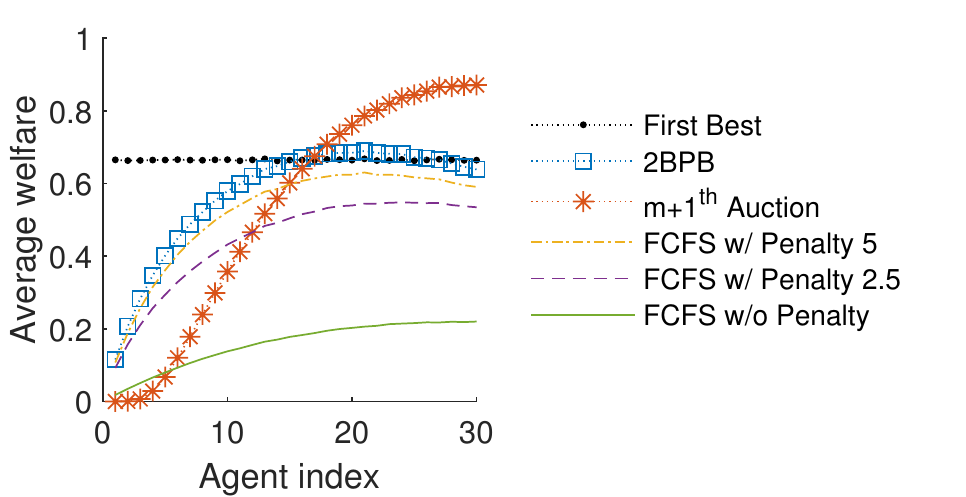}
	\caption{Average welfare. \label{fig:exp_soph_array_beta_welfare}}
\end{subfigure}%
\hspace{2em}
\begin{subfigure}[t]{0.45\textwidth}
	\centering
 	\includegraphics[scale=\figScale]{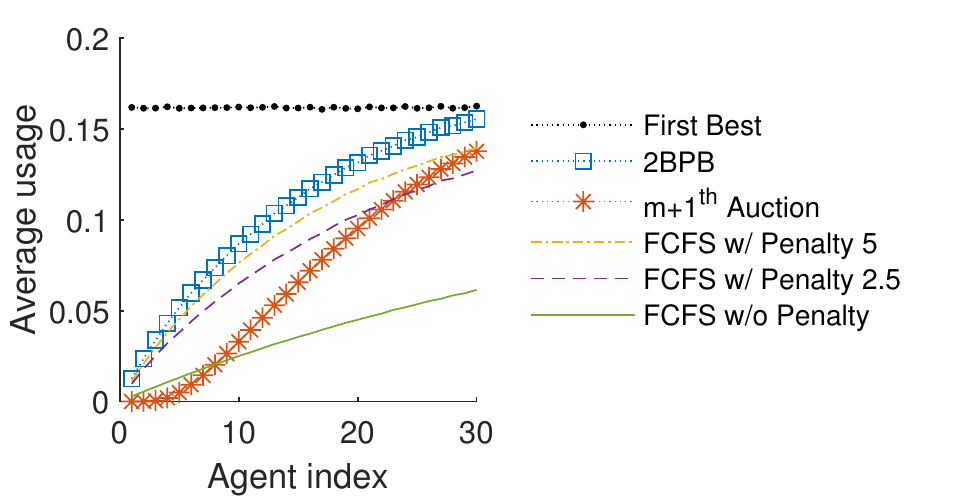}
	\caption{Average usage. \label{fig:exp_soph_array_beta_utilization}}
\end{subfigure}%
\caption{Average welfare and usage for sophisticates with exponential types, fixing $\beta_i=i/n$. 
\label{fig:exp_soph_array_beta} 
}
\end{figure}

The first observation is that under the $m+1\th$ price auction, the welfare and usage for the most biased agents are effectively zero, while the least biased agents achieve better welfare and utilization than the first-best outcome. This is because when the bids of sophisticated agents factor in the level of present bias, the more biased agents bid lower than the less biased agents (see Claim~\ref{prop:bid_monotonicity} in Appendix~\ref{appx:bid_monotonicity}), and therefore get assigned with lower probability.
The more biased sophisticated agents also bid lower under the 2BPB mechanism, and as a result the 2BPB mechanism is not able to achieve the same level of welfare for all agents. Nevertheless, it achieves large improvements for the more biased population compared to the $m+1\th$ price auction, and also higher welfare and better equity than the FCFS benchmarks. 
%
%

\subsection{Impact on Agents with Different Degrees of Sophistication}
\label{sec:simulations_degree_of_naivete}

In this section, we consider a population of agents with the same present-bias factor, but varying degree of sophistication. We assume the same distribution of the expected opportunity cost $\lambda_i^{-1}$ and the future value $w_i$ as in the earlier settings, but fix $\beta_i = 0.5$ and $\betahat_i = 1 - 0.5 i/n$ for all agents. The smaller an agent's index, the more naive she is about her present bias: agent $1$ has $\betahat_1$ close to $1$ thus is almost fully naive, whereas agent $n$ has $\betahat_n = 0.5 = \beta_n$ thus is fully sophisticated.

\begin{figure}[t!]
\centering
\begin{subfigure}[t]{0.45\textwidth}
	\centering
	\includegraphics[scale=\figScale]{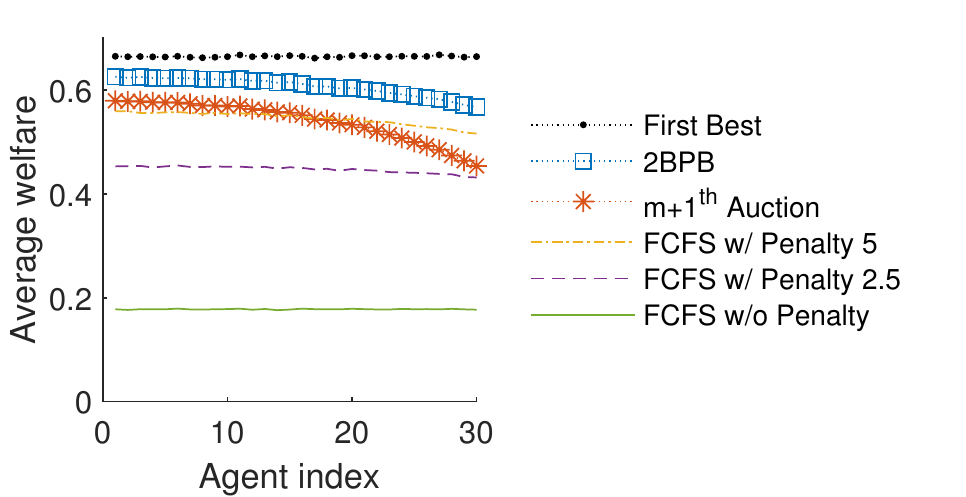}
	\caption{Average welfare. \label{fig:exp_fix_beta_array_betahat_welfare}}
\end{subfigure}%
\hspace{2em}
\begin{subfigure}[t]{0.45\textwidth}
	\centering
 	\includegraphics[scale=\figScale]{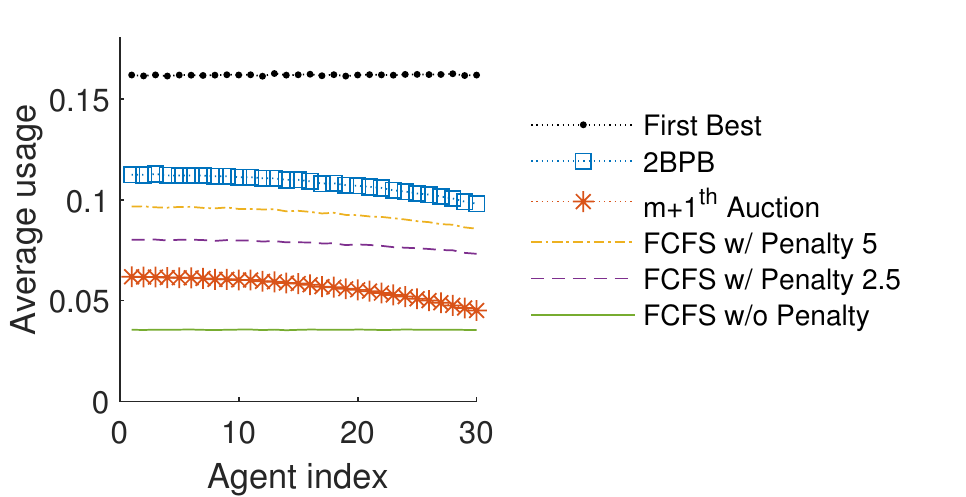}
	\caption{Average usage. \label{fig:exp_fix_beta_array_betahat_utilization}}
\end{subfigure}%
\caption{Welfare and usage for agents with exponential types and varying degrees of naivete. 
\label{fig:exp_fix_beta_array_betahat}  \vspace{-0.5em}
}
\end{figure}

Fixing $n = 30$, the per-economy welfare and usage for each agent averaged over 1 million randomly generated economies are as shown in Figure~\ref{fig:exp_fix_beta_array_betahat}. All agents achieve higher welfare and usage under the 2BPB mechanism, in comparison to the $m+1\th$ price auction and the FCFS mechanisms. The outcome under the 2BPB mechanism is again more equitable than that under the $m+1\th$ price auction. It is curious that agents who are less naive (higher indices) have lower welfare and usage. This is because the more sophisticated agents can better predict their future suboptimal decisions, and as a result bid lower and accept fixed penalties with lower probability. It is indeed the case that the more sophisticated agents achieve slightly higher expected utility.

\section{Conclusion} \label{sec:conclusion}

We propose the two-bid penalty-bidding mechanism for resource allocation in the presence of uncertain future values and present bias.  We prove the existence of a simple dominant strategy equilibrium, regardless of an agent's value distribution, level of  present bias, or degree of sophistication. Simulation results show  that the mechanism improves utilization and achieves higher welfare and better equity in comparison with mechanisms broadly used in practice as well as mechanisms that are welfare-optimal for settings without present bias.

In future work, it will be interesting to conduct empirical studies to better understand people's behavior in settings such as exercise studios and events, with the goal of separating the effect on utilization of uncertainty from that of present bias.  Another interesting direction is to generalize the model to allow for more than two time periods, where agents may arrive asynchronously, when uncertainty unfolds over time, and where resources can be re-allocated.


\bibliography{CSP_refs}

\newpage

\appendix

\noindent{}\textbf{\huge{Appendix}}

\bigskip

\noindent{}Appendix~\ref{appx:additional_simulations} provides additional simulation results. Appendix~\ref{appx:additional_discussion} provides additional discussions and examples omitted from the body of the paper. Appendix~\ref{appx:derivations} derives for various type models the expected utility function and the dominant strategy equilibrium under different mechanisms.

\section{Additional Simulation Results} \label{appx:additional_simulations}

This section presents additional simulation results for the exponential type model that are omitted from the body of the paper, as well as results for the $(c_i, p_i)$ type model and a uniform type model, which we introduce in Appendix~\ref{appx:additional_simu_uniform}.

\subsection{Additional Results for Exponential Model} \label{appx:additional_simu_exp}

We first consider the same setup as analyzed in Section~\ref{sec:simulations}, where agents have exponential types, and there are $m = 5$ homogeneous resources to assign. We present the results as the number of agents varies, for settings where agents are all 
fully rational, or where agents are partially naive. 

\paragraph{Fully Rational Agents} 
Figure~\ref{fig:exp_rational_rand_beta} shows the average welfare and utilization of 100,000 randomly generated economies, assuming all agents are fully rational with $\betahat_i = \beta_i = 1$. The 2BPB mechanism achieves slightly worse welfare than the $m+1\th$ price auction, which is provably optimal for this setting. The 2BPB mechanism, however, still achieves higher utilization, and also robustly outperforms the FCFS mechanisms.

\begin{figure}[hpbt!]
\centering
\begin{subfigure}[t]{0.45\textwidth}
	\centering
	\includegraphics[scale=\figScale]{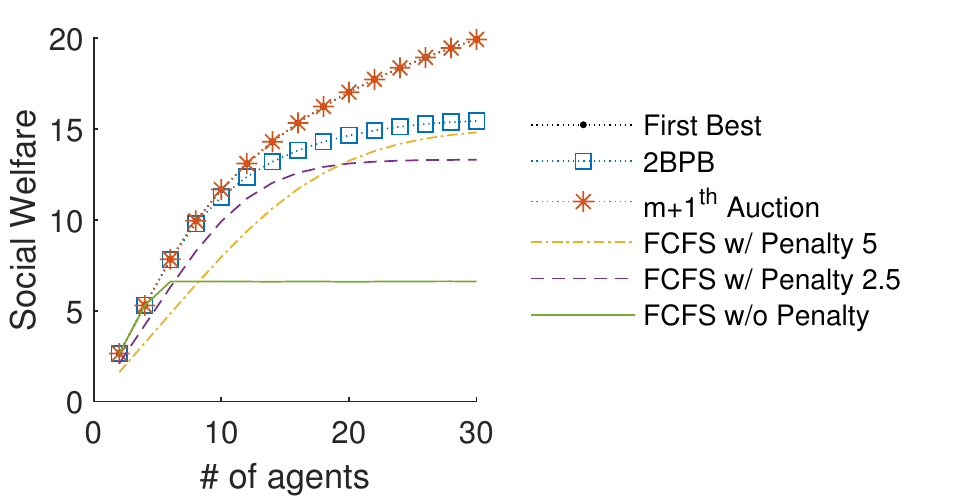}
	\caption{Social welfare. \label{fig:exp_rational_rand_beta_welfare}}
\end{subfigure}%
\hspace{2.5em}
\begin{subfigure}[t]{0.45\textwidth}
	\centering
 	\includegraphics[scale=\figScale]{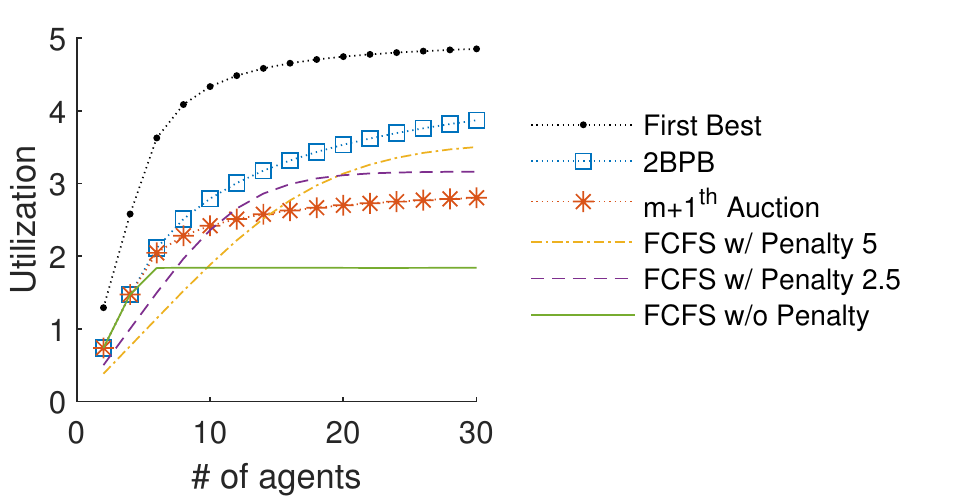}
	\caption{Utilization. \label{fig:exp_rational_rand_beta_utilization}}
\end{subfigure}%
\caption{Social welfare and utilization for rational agents with exponential types. 
\label{fig:exp_rational_rand_beta} 
}
\end{figure}

\paragraph{Partially Naive Agents}  We now consider partially naive agents, where each agent has bias factors independently drawn according to $\beta_i \sim \mathrm{U}[0,~ 1]$, and $\betahat_i \sim \mathrm{U}[\beta_i,~1]$. The average welfare and utilization over 100,000 random economies are as shown in Figure~\ref{fig:exp_part_rand_beta}. The outcome is in between the fully sophisticated and the fully naive settings discussed in the body of the paper.

\begin{figure}[hpbt!]
\centering
\begin{subfigure}[t]{0.45\textwidth}
	\centering
	\includegraphics[scale=\figScale]{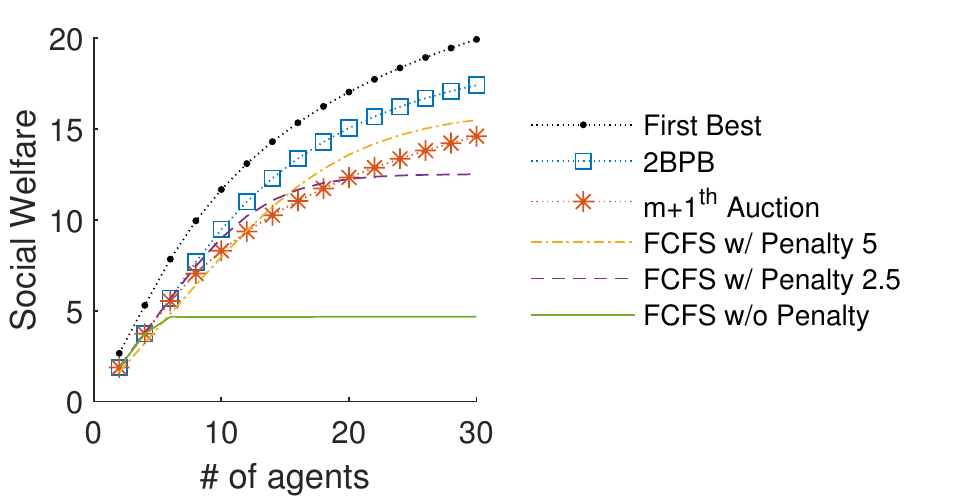}
	\caption{Social welfare. \label{fig:exp_part_rand_beta_welfare}}
\end{subfigure}%
\hspace{2.5em}
\begin{subfigure}[t]{0.45\textwidth}
	\centering
 	\includegraphics[scale=\figScale]{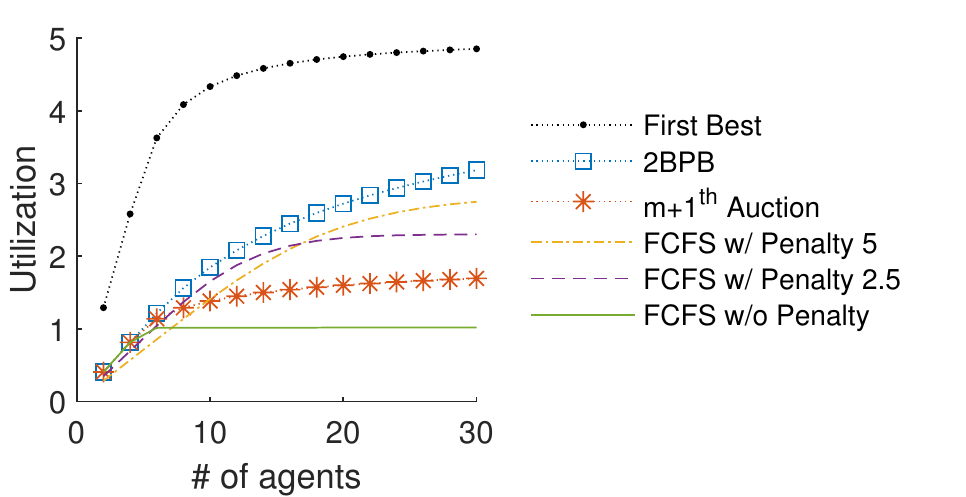}
	\caption{Utilization. \label{fig:exp_part_rand_beta_utilization}}
\end{subfigure}%
\caption{Social welfare and utilization for partially naive agents with exponential types. 
\label{fig:exp_part_rand_beta} 
}
\end{figure}


\subsection{The $(c_i, p_i)$ Type Model}

n this section, we compare via simulation different mechanisms and benchmarks for the $(c_i, p_i)$ type model (see Example~\ref{ex:vipi}).\footnote{Agents' expected utility functions and DSE bids are derived in Appendix~\ref{appx:derivations}.} We consider a type distribution  where the future value $w_i$, cost $c_i$, and probability of being able to show up $p_i$ are uniformly distributed:
\begin{align*}
	w_i &\sim \mathrm{U}[0,~L],\\
	c_i & \sim \mathrm{U}[0,~w_i], \\
	p_{i} &\sim \mathrm{U}[0,~1].
\end{align*}
With $c_{i} < w_i$  and $p_i \in (0,1)$ with probability~1, assumptions (A1)-(A3) are satisfied almost surely. The results are not sensitive to the choices of  parameter $L$, and we fix $L = 10$ for all results presented in the rest of this section. In this case, the expected value of $w_i$ is $5$.

\subsubsection{Varying Resource Scarcity}

Fixing the number of resources at $m = 5$, we first examine the outcomes under different mechanisms and benchmarks as the number of agents varies from $2$ to $30$. When all agents are naive with $\beta_i \sim \mathrm{U}[0,~1]$ and $\betahat_i = 1$. The average social welfare and utilization over 100,000 randomly generated profiles are as shown in Figure~\ref{fig:cipi_naive_rand_beta}. For economies with fully sophisticated agents, where present-biased factors are distributed as $\beta_i \sim \mathrm{U}[0,1]$, but $\betahat_i = \beta_i$ for all $i \in N$, the average social welfare and utilization are as shown in Figure~\ref{fig:cipi_soph_rand_beta}. We see trends similar to the results under the exponential type model presented in Section~\ref{sec:simulations_scarcity}.

\begin{figure}[t!]
\centering
\begin{subfigure}[t]{0.45\textwidth}
	\centering
	\includegraphics[scale=\figScale]{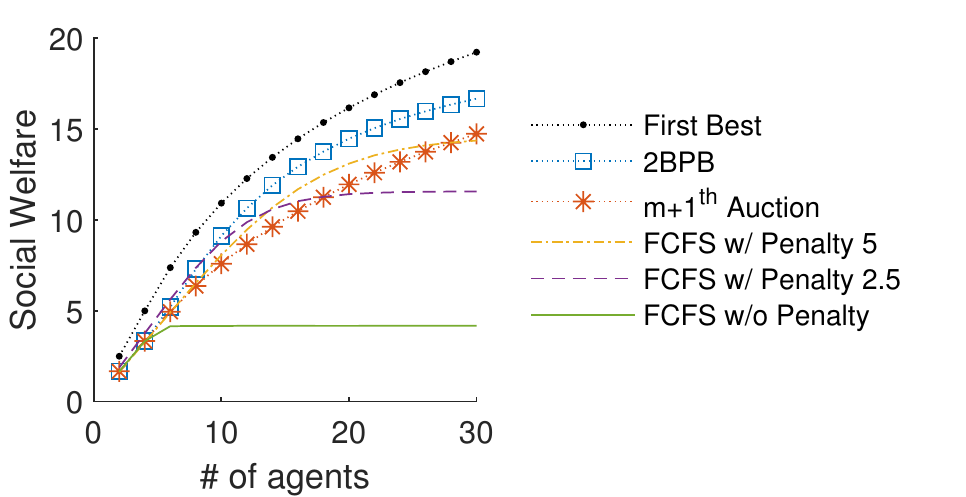}
	\caption{Social welfare. \label{fig:cipi_naive_rand_beta_welfare}}
\end{subfigure}%
\hspace{2em}
\begin{subfigure}[t]{0.45\textwidth}
	\centering
 	\includegraphics[scale=\figScale]{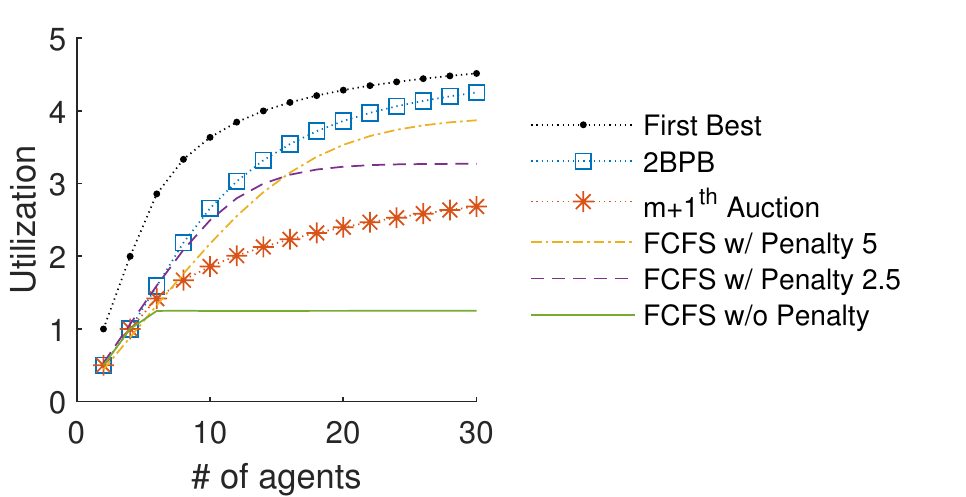}
	\caption{Utilization. \label{fig:cipi_naive_rand_beta_utilization}}
\end{subfigure}%
\caption{Social welfare and utilization for naive agents with $(c_i,p_i)$ types. 
\label{fig:cipi_naive_rand_beta} 
}
\end{figure}

\begin{figure}[t!]
\centering
\begin{subfigure}[t]{0.45\textwidth}
	\centering
	\includegraphics[scale=\figScale]{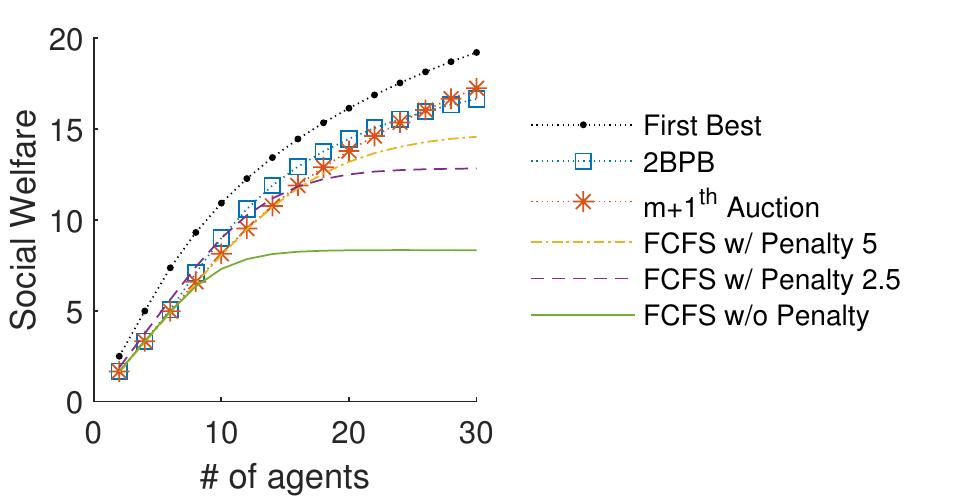}
	\caption{Social welfare. \label{fig:cipi_soph_rand_beta_welfare}}
\end{subfigure}%
\hspace{2em}
\begin{subfigure}[t]{0.45\textwidth}
	\centering
 	\includegraphics[scale=\figScale]{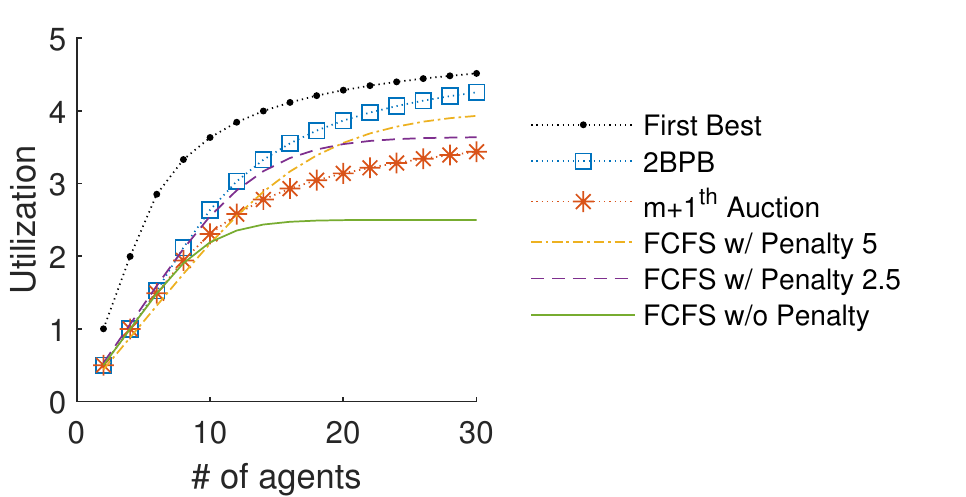}
	\caption{Utilization. \label{fig:cipi_soph_rand_beta_utilization}}
\end{subfigure}%
\caption{Social welfare and utilization for sophisticated agents with $(c_i,p_i)$ types. 
\label{fig:cipi_soph_rand_beta} 
}
\end{figure}

\subsubsection{Impact on Agents with Different Degrees of Bias}

We now consider a population of agents with the same distribution of $c_i$, $p_i$ and $w_i$ as in the previous setting, but where the total number of agents is fixed at $n=30$, and the present bias factor of  each agent $i$ is fixed at $\beta_i = i/n$. 
Assuming all agents are naive with $\beta_i = 1$, the average welfare and average usage of each agent (over 1 million randomly generated economies) is as shown in Figure~\ref{fig:cipi_naive_array_beta}. For the setting where all agents are fully sophisticated with $\betahat_i = \beta_i$, the average welfare and usage of each agent is as shown in Figure~\ref{fig:cipi_soph_array_beta}.

\begin{figure}[t!]
\centering
\begin{subfigure}[t]{0.45\textwidth}
	\centering
	\includegraphics[scale=\figScale]{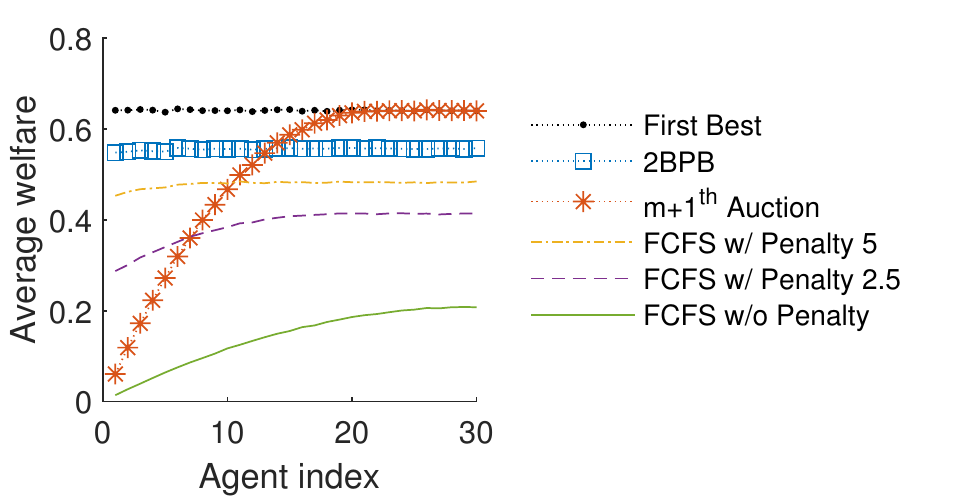}
	\caption{Average welfare. \label{fig:cipi_naive_array_beta_welfare}}
\end{subfigure}%
\hspace{2em}
\begin{subfigure}[t]{0.45\textwidth}
	\centering
 	\includegraphics[scale=\figScale]{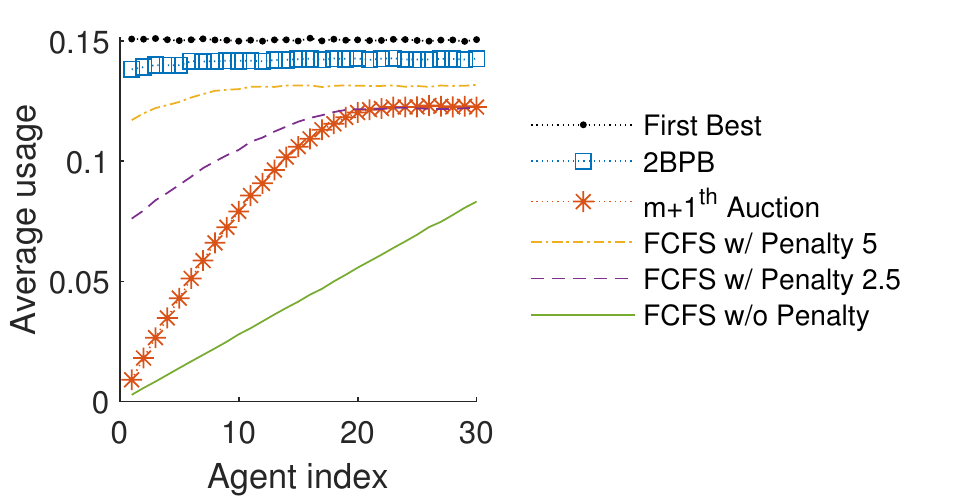}
	\caption{Average usage. \label{fig:cipi_naive_array_beta_utilization}}
\end{subfigure}%
\caption{Average welfare and usage for naive agents with  $(c_i,p_i)$  types, fixing $\beta_i = i/n$. 
\label{fig:cipi_naive_array_beta} 
}
\end{figure}

\begin{figure}[t!]
\centering
\begin{subfigure}[t]{0.45\textwidth}
	\centering
	\includegraphics[scale=\figScale]{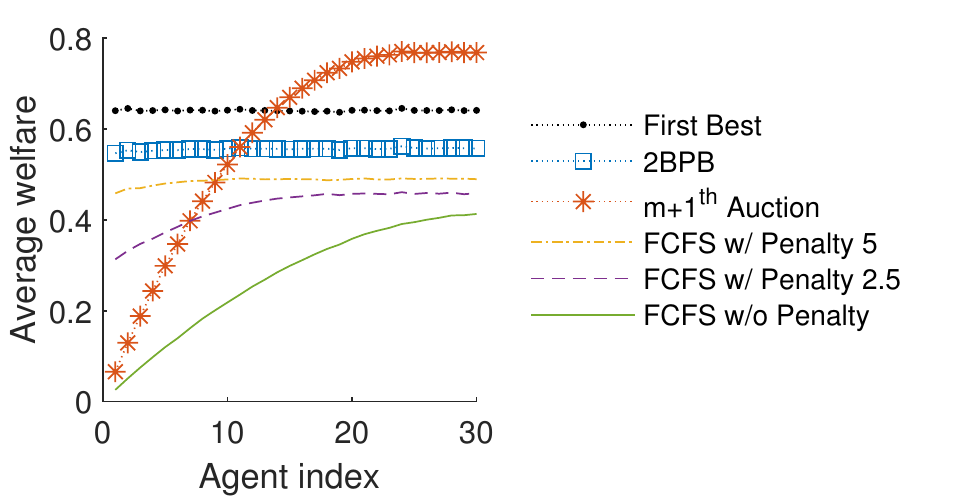}
	\caption{Average welfare. \label{fig:cipi_soph_array_beta_welfare}}
\end{subfigure}%
\hspace{1em}
\begin{subfigure}[t]{0.45\textwidth}
	\centering
 	\includegraphics[scale=\figScale]{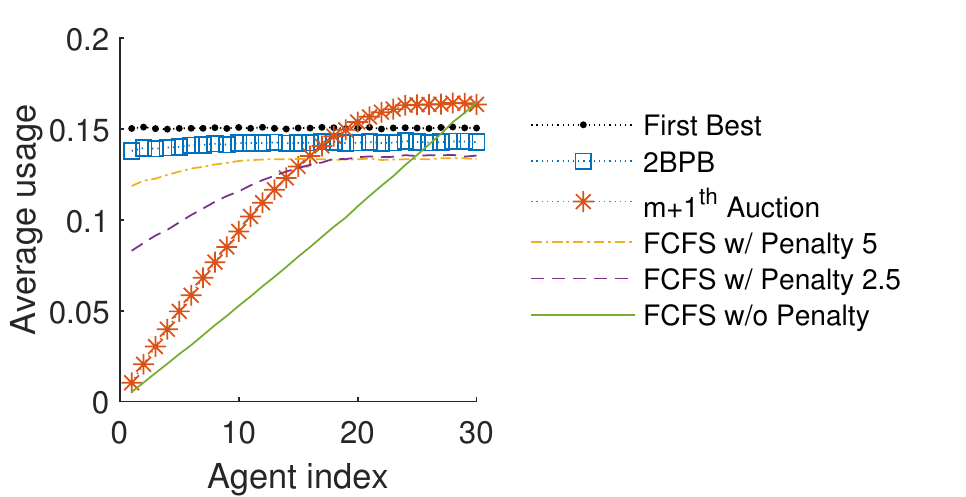}
	\caption{Average usage. \label{fig:cipi_soph_array_beta_utilization}}
\end{subfigure}%
\caption{Average welfare and usage for sophisticated agents with  $(c_i,p_i)$ types, fixing $\beta_i=i/n$. 
\label{fig:cipi_soph_array_beta} 
}
\end{figure}

We can see from Figures~\ref{fig:cipi_naive_array_beta} and~\ref{fig:cipi_soph_array_beta} that for both the naive and the sophisticated settings, the less biased agents (higher indices) achieve significantly higher average welfare and usage under the $m+1\th$ price auction and the FCFS mechanisms. In contrast, this disparity under the 2BPB mechanism is much smaller, and the most biased agents benefit the most under the 2BPB mechanism in comparison to the FCFS mechanisms. 


\subsection{Uniform Type Mode}  \label{appx:additional_simu_uniform}

In this section, we study the performance of various mechanisms and benchmarks for agents whose period~$1$ values are uniformly distributed as in the following Example~\ref{ex:uniform_model}. 

\begin{example}[Uniform model] \label{ex:uniform_model} 
In period~$1$, each agent incurs a uniformly distributed opportunity cost for using the resource, i.e. $V_i\1 \sim \mathrm{U}[-\alpha_i, 0]$. If the agent used a resource, she gains a expected future utility of $v_i\2 = \fixedV_i > 0$.
See Figure~\ref{fig:pdf_uniform}. $\E{V_i\1} = -\alpha_i/2 $, thus $\E{V_i} = -\alpha_i/2 + \fixedV_i$ and (A1)-(A3) are satisfied as long as $w_i < 1/\alpha_i/2$.
\end{example}

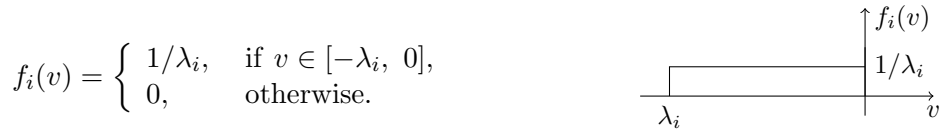
\begin{figure}[t!]
\centering
\begin{subfigure}[t]{0.45\textwidth}
	\centering
\small{
\begin{tikzpicture}[scale = 0.9][font=\normalsize]
	\draw (0,0) node  {$ f_i(v) = \pwfun{ 1/\lambda_i, & \txtif v \in [-\lambda_i,~0], \\
	0, & \text{ otherwise.}}$};		
	\draw (0,-0.6) node{{\color{white} some text}};
\end{tikzpicture}	
}
\end{subfigure}%
%
%
%
\begin{subfigure}[t]{0.45\textwidth}
	\centering
\begin{tikzpicture}[scale = 1.3][font=\small]

\draw[->] (-0.8,0.3) -- (2.2,0.3)  node[anchor=north] {$v$};
  
\draw[->] (1.5,0.1) -- (1.5,1.2);
\draw (1.5,1.1) node[anchor=west] {$f_i(v)$};

\draw[-] (-0.5, 0.3) -- (-0.5, 0.6) -- (1.5, 0.6);

\draw[-] (1.5, 0.3) -- (1.5,0.8);

\draw (-0.5, 0.3) node[anchor=north] {$\lambda_i$};
\draw (1.5, 0.6) node[anchor=west] {$1/\lambda_i$};
\end{tikzpicture}
\end{subfigure}%
%
\caption{Agent period~1 value distribution under the uniform type model. \label{fig:pdf_uniform}} \vsq{-0.5em}
\end{figure}

Agents' expected utility functions and DSE bids under the uniform model are derived in Appendix~\ref{appx:derivations}. We consider the following type distribution in the population, where $\alpha_i$ and $w_i$ are both uniformly distributed:
\begin{align*}
	\alpha_i &\sim \mathrm{U}[0,~L],\\
	w_i & \sim \mathrm{U}[0,~\alpha_i/2].
\end{align*}
With $w_i \in (0, \alpha_i/2)$ with probability~1, assumptions (A1)-(A3) are satisfied almost surely. The results are not sensitive to the choices of parameter $L$, and we fix $L = 20$ for all results presented in the rest of this section, in which case the average value of $w_i$ is $5$.

\subsubsection{Varying Resource Scarcity}

Fixing the number of resources at $m = 5$, we first examine the outcomes under various mechanisms and benchmarks as the number of agents varies from $2$ to $30$. For the scenario where all agents are naive, with $\beta_i \sim \mathrm{U}[0,~1]$ and $\betahat_i = 1$, the average social welfare and utilization over 100,000 randomly generated profiles are as shown in Figure~\ref{fig:uniform_naive_rand_beta}. For economies with fully sophisticated agents, where present-biased factors are distributed as $\beta_i \sim \mathrm{U}[0,1]$ and $\betahat_i = \beta_i$, the average social welfare and utilization are as shown in Figure~\ref{fig:uniform_soph_rand_beta}.

\begin{figure}[t!]
\centering
\begin{subfigure}[t]{0.45\textwidth}
	\centering
	\includegraphics[scale=\figScale]{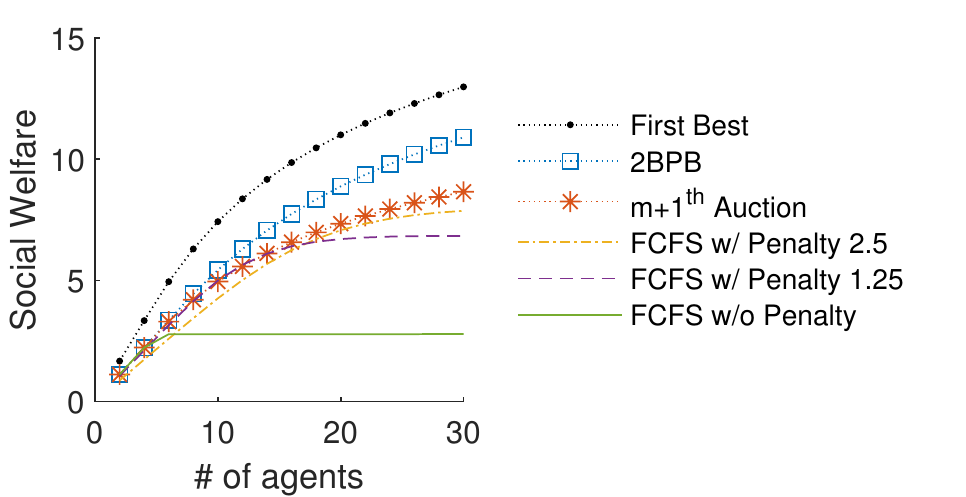}
	\caption{Social welfare. \label{fig:uniform_naive_rand_beta_welfare}}
\end{subfigure}%
\hspace{2em}
\begin{subfigure}[t]{0.45\textwidth}
	\centering
 	\includegraphics[scale=\figScale]{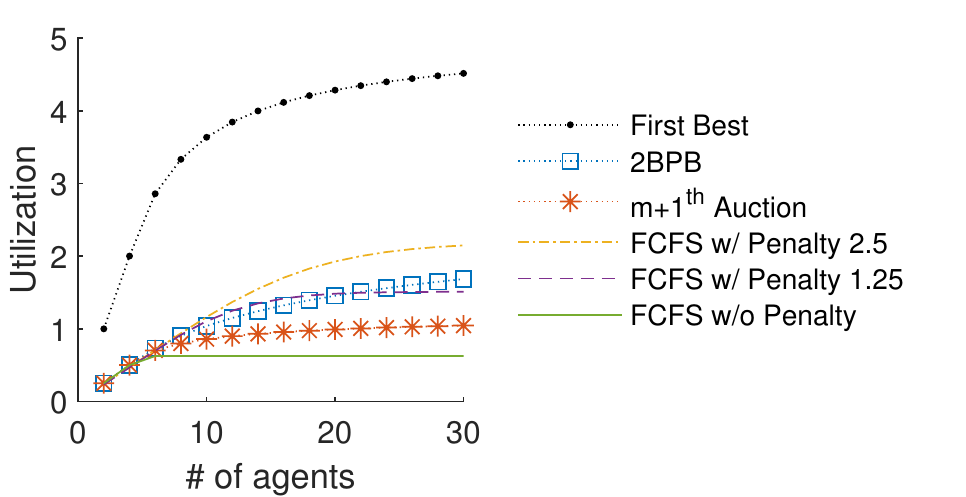}
	\caption{Utilization. \label{fig:uniform_naive_rand_beta_utilization}}
\end{subfigure}%
\caption{Social welfare and utilization for naive agents with uniform types. 
\label{fig:uniform_naive_rand_beta}
}
\end{figure}

\begin{figure}[t!]
\centering
\begin{subfigure}[t]{0.45\textwidth}
	\centering
	\includegraphics[scale=\figScale]{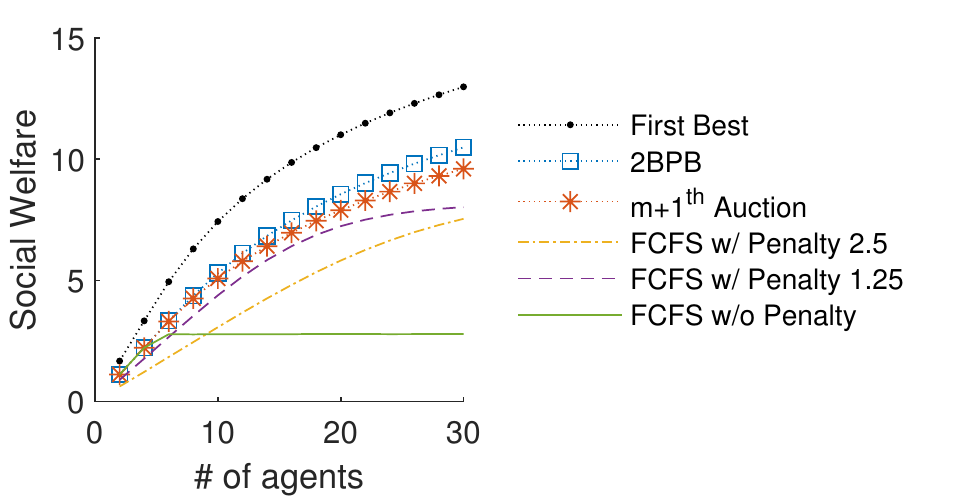}
	\caption{Social welfare. \label{fig:uniform_soph_rand_beta_welfare}}
\end{subfigure}%
\hspace{2em}
\begin{subfigure}[t]{0.45\textwidth}
	\centering
 	\includegraphics[scale=\figScale]{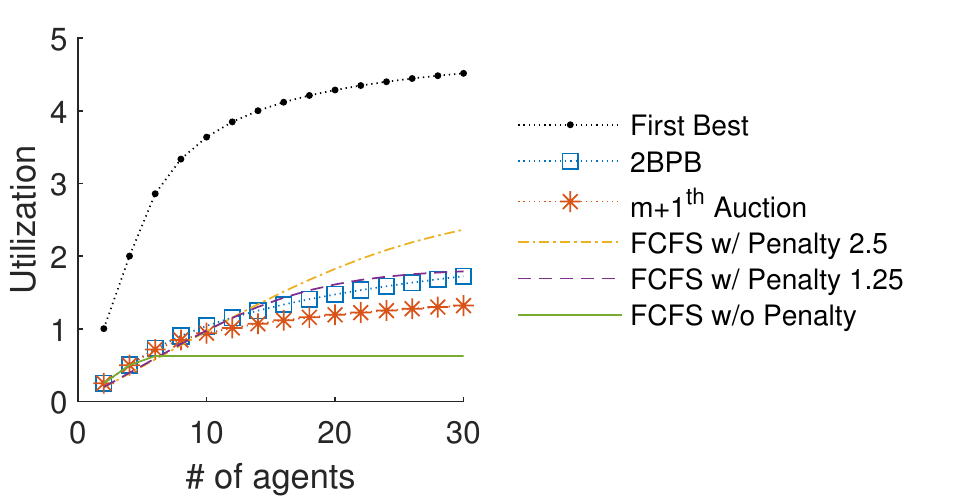}
	\caption{Utilization. \label{fig:uniform_soph_rand_beta_utilization}}
\end{subfigure}%
\caption{Social welfare and utilization for sophisticated agents with uniform types. 
\label{fig:uniform_soph_rand_beta} 
}
\end{figure}

\subsubsection{Impact on Agents with Different Degrees of Bias}

We now consider a population of agents with the same distribution of $\alpha_i$ and $w_i$ as in the previous setting, but where the total number of agents is fixed at $n=30$, and the present bias factor of agent each agent $i$ is fixed at $\beta_i = i/n$. Assuming all agents are naive, the average welfare and average usage of each agent (over 1 million randomly generated economies) is as shown in Figure~\ref{fig:uniform_naive_array_beta}. Assuming that agents are fully sophisticated instead, the average welfare and utilization of each agent is as shown in Figure~\ref{fig:uniform_soph_array_beta}.

\begin{figure}[t!]
\centering
\begin{subfigure}[t]{0.45\textwidth}
	\centering
	\includegraphics[scale=\figScale]{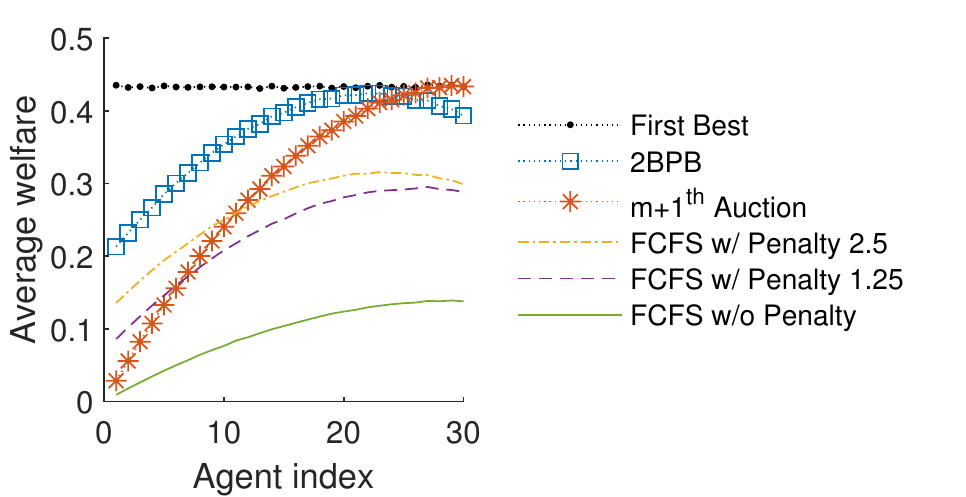}
	\caption{Average welfare. \label{fig:uniform_naive_array_beta_welfare}}
\end{subfigure}%
\hspace{1em}
\begin{subfigure}[t]{0.45\textwidth}
	\centering
 	\includegraphics[scale=\figScale]{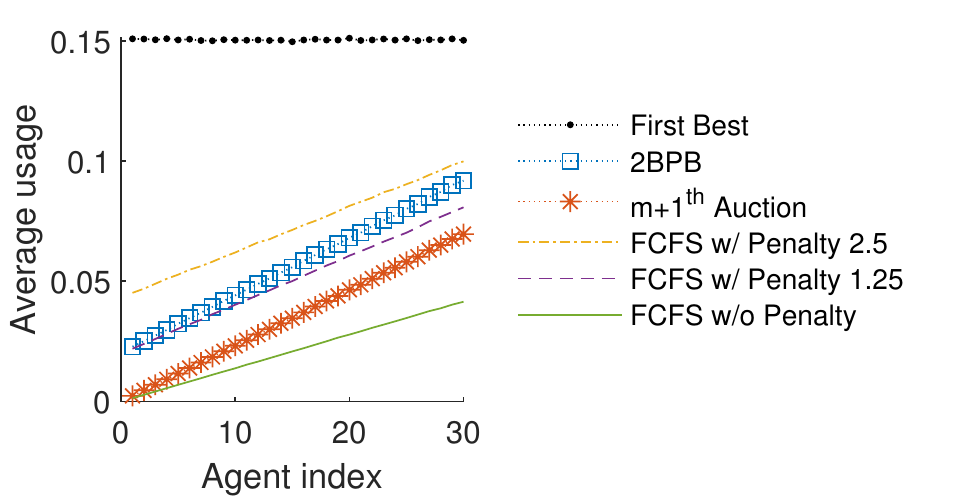}
	\caption{Average usage. \label{fig:uniform_naive_array_beta_utilization}}
\end{subfigure}%
\caption{Average welfare and usage for naive agents with uniform types, fixing $\beta_i = i/n$. 
\label{fig:uniform_naive_array_beta}
}
\end{figure}

\begin{figure}[t!]
\centering
\begin{subfigure}[t]{0.45\textwidth}
	\centering
	\includegraphics[scale=\figScale]{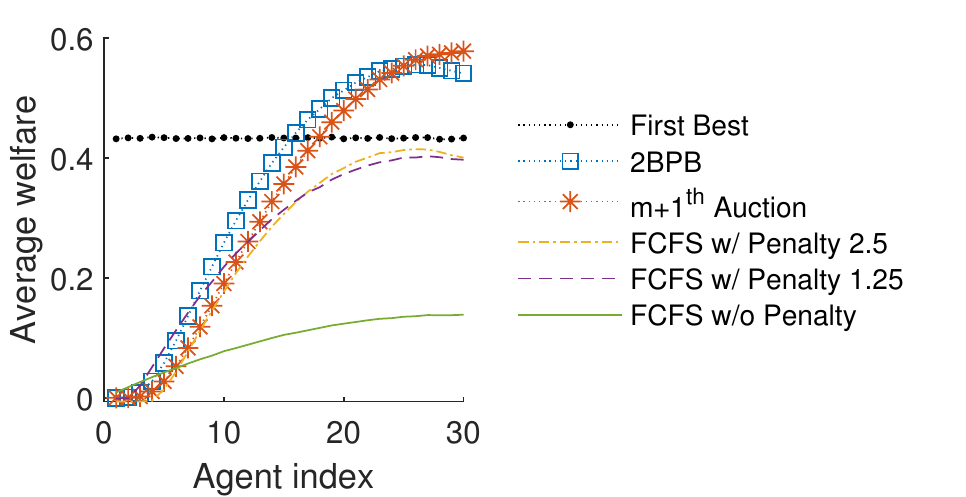}
	\caption{Social welfare. \label{fig:uniform_soph_array_beta_welfare}}
\end{subfigure}%
\hspace{1em}
\begin{subfigure}[t]{0.45\textwidth}
	\centering
 	\includegraphics[scale=\figScale]{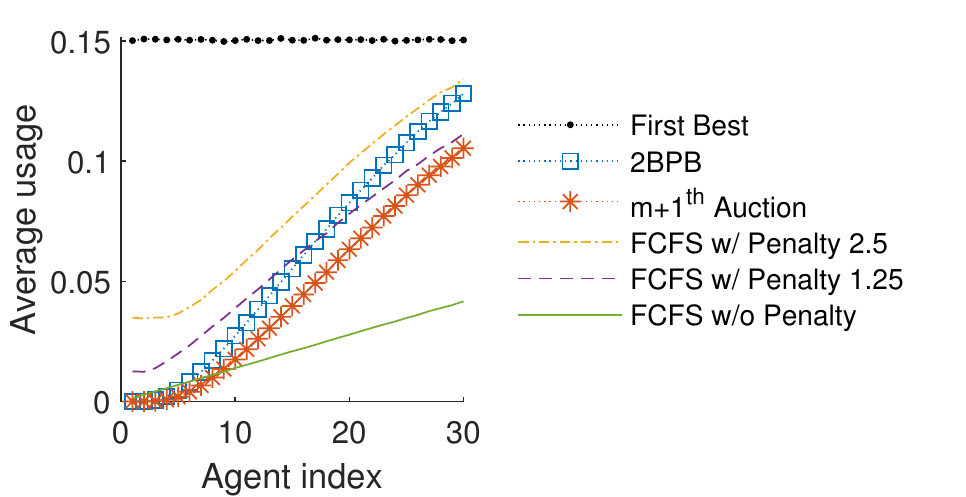}
	\caption{Average usage. \label{fig:uniform_soph_array_beta_utilization}}
\end{subfigure}%
\caption{Average welfare and usage for sophisticated agents with uniform types, fixing $\beta_i=i/n$. 
\label{fig:uniform_soph_array_beta}
}
\end{figure}

\section{Additional Examples and Discussion} \label{appx:additional_discussion}

In this section, we provide additional discussions and examples that are omitted from the body of this paper.

\subsection{Monotonicity of Bids} \label{appx:bid_monotonicity}

The following proposition shows that the less biased an agent believes she is, the higher she bids in DSE under the 2BPB mechanism and the $m+1\th$ price auction. 

\begin{proposition} \label{prop:bid_monotonicity} Under the 2BPB mechanism, or the $m+1\th$ price auction, an agent's bid in dominant strategy is monotonically increasing in her subjective present bias factor $\betahat_i$.
\end{proposition}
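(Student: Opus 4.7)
The plan is to reduce both parts of the proposition to a single monotonicity lemma: for any fixed $z \geq 0$ and for two agent types that differ only in the perceived bias factor, the subjective expected utility $\uhat_i(z)$ is weakly increasing in $\betahat_i$. Once this lemma is in hand, both claims follow cleanly. Under the $(m+1)\th$ price auction the dominant-strategy bid is the reservation value $\uhat_i(0)$ for winning at zero penalty (as in Example~\ref{ex:two_bid_better_than_sp}), which is monotone in $\betahat_i$ by the lemma applied at $z=0$. Under the 2BPB mechanism the dominant-strategy bid is the zero-crossing $\zc_i$ of $\Uhat_i(\cdot)$ identified in Lemma~\ref{lem:max_exp_u}, and since $\Uhat_i(z) = \sup_{z' \geq z} \uhat_i(z')$ is pointwise monotone in $\betahat_i$ whenever $\uhat_i$ is, the super-level set $\{z : \Uhat_i(z) \geq 0\}$ only enlarges as $\betahat_i$ grows, so its supremum does too.

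To prove the monotonicity lemma I would use a pointwise coupling argument. Fix $\betahat^+ > \betahat^-$ in $[0,1]$, and set $\tau^{\pm} \triangleq -z - \betahat^{\pm} v_i\2$, so that $\tau^+ \leq \tau^-$ because $v_i\2 \geq 0$. Writing
\begin{align*}
\uhat_i^{\pm}(z) \;=\; \E{\phi^{\pm}(V_i\1)}, \qquad \phi^{\pm}(v) \;\triangleq\; (v + v_i\2)\,\one{v \geq \tau^{\pm}} \;-\; z\,\one{v < \tau^{\pm}},
\end{align*}
a direct case split shows that $\phi^+(v) - \phi^-(v)$ vanishes outside the band $\tau^+ \leq v < \tau^-$ and equals $v + v_i\2 + z$ inside it. On that band $v \geq \tau^+ = -z - \betahat^+ v_i\2$, so $v + v_i\2 + z \geq (1-\betahat^+)\,v_i\2 \geq 0$. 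Hence $\phi^+ \geq \phi^-$ pointwise, and taking expectations gives $\uhat_i^+(z) \geq \uhat_i^-(z)$.

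Assembling the pieces: for the $(m+1)\th$ price auction, applying the lemma at $z=0$ immediately yields monotonicity of $\uhat_i(0)$. For 2BPB, pointwise monotonicity of $\uhat_i(\cdot)$ in $\betahat_i$ is inherited by $\Uhat_i(\cdot)$ as a supremum of a monotone family, and since by Lemma~\ref{lem:max_exp_u} $\Uhat_i$ is continuous and eventually strictly negative, the zero-crossing $\zc_i$ is precisely $\sup\{z : \Uhat_i(z) \geq 0\}$, which is weakly increasing in $\betahat_i$.

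The main thing to get right is to avoid any assumption that the distribution $F_i$ admits a density, since some of the paper's motivating type models (notably the $(c_i,p_i)$ model of Example~\ref{ex:vipi}) have atoms and even an atom at $-\infty$. The coupling argument above is deliberately written in the form $\E{\phi^+(V_i\1) - \phi^-(V_i\1)} \geq 0$ so that it goes through for arbitrary distributions; the degenerate cases ($\betahat^+ = 1$, $v_i\2 = 0$, or the middle band having probability zero) only make the inequality $\phi^+ \geq \phi^-$ weaker, never violating it.
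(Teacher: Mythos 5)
Your proposal is correct and follows essentially the same route as the paper: the paper's proof likewise establishes that $\uhat_i(z)$ is pointwise monotone in $\betahat_i$ by showing that the difference is an expectation supported on the band $[-z-\betahat_i v_i\2,\,-z-\betahat_{i'} v_i\2)$, where the integrand $V_i\1+v_i\2+z\geq(1-\betahat_i)v_i\2\geq 0$, and then reads off both conclusions from the bids $\uhat_i(0)$ and the zero-crossing $\zc_i$. Your only addition is spelling out more explicitly why pointwise monotonicity of $\uhat_i$ passes to $\Uhat_i$ and hence to $\zc_i$, a step the paper asserts without elaboration.
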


\begin{proof}

We first prove that for any penalty $z$, an agent's subjective expected utility $\uhat_i(z)$ is monotonically increasing in $\betahat_i$. Let $i$ and $i'$ be two agents who are identical except that $\betahat_i \geq \betahat_{i'}$. For any $z \in \setR$, we have 
\begin{align*}
	& \uhat_i(z) - \uhat_{i'}(z) \\
	= &  \E{ (V_i\1  + v_i\2) \one{V_i\1 \geq -z - \betahat_i v_i\2}} \notag  - z \Pm{V_i\1< -z - \betahat_i v_i\2} \\
	& - \E{ (V_i\1  + v_i\2) \one{V_i\1 \geq -z - \betahat_i' v_i\2}} \notag  + z \Pm{V_i\1< -z - \betahat_i' v_i\2} \\
	=& \E{ (V_i\1  + v_i\2 + z) \one{V_i\1 \in [-z - \betahat_i v_i\2, -z - \betahat_i' v_i\2)}} \\
	\geq &  0.
\end{align*}
The last inequality holds since $z \in [-z - \betahat_i v_i\2, -z - \betahat_i' v_i\2)$, $V_i\1  + v_i\2 + z \geq -z - \betahat_i v_i\2 + v_i\2 + z = (1 - \betahat_i) v_i\2 \geq 0$. 

This immediately implies the monotonicity of bids under the $m+1\th$ price auction, where agents bid $\uhat_i(0)$ in DSE. Agent $i'$ will also bid higher under the 2BPB mechanism, since the $\uhat_i(z) \geq \uhat_{i'}(z)$ for all $z \in \setR$ also implies that the zero-crossings of $\Uhat_i(z)$ and $\Uhat_{i'}(z)$ also satisfy $\zc_i \geq \zc_{i'}$.
\end{proof}

\subsection{The 2BPB Mechanism Does Not Optimize Utilization} \label{appx:not_utilization_opt}

In this section, we provide two examples which illustrate that when agents are not fully rational, the 2BPB mechanism does not necessarily optimize utilization.

The first examples shows that the 2BPB mechanism may end up achieving zero utilization and welfare by allocating to a naive agent and charging a penalty that is too small to incentivize utilization.

\begin{example} \label{exmp:two_bid_suboptimal_utilization_1}

Consider the allocation of one resource to two agents with $(c_i, p_i)$ types, where 
\begin{enumerate}[$\bullet$]
	\item $c_1 = 5$, $p_1 = 0.8$, $w_1 = 7.5$, $\beta_1 = 0.2$, $ \betahat_1 = 1$,
	\item $c_2 = 5$, $p_2 = 1/6$, $w_2 = 20$, $\beta_2 = \betahat_2 = 1$.
\end{enumerate}

Agent $1$ is fully naive and agent $2$ is fully rational. When $z < c_1 - \beta_1 w_1 = 3.5$, agent $1$ never uses the resource. 
The expected utility functions and the subjective expected utility functions of the two agents are as shown in Figures~\ref{fig:exmp_not_opt_1_u} and \ref{fig:exmp_not_opt_1_uhat}.

\begin{figure}[t!]
\centering
\begin{tikzpicture}[scale = 0.4][font = \small]

\draw[->] (-0.15,0) -- (12,0) node[anchor=north] {$z$};

\draw[->] (0,-3.5) -- (0, 3.5) node[anchor=west] {$u_i(z)$};

\draw[-] (0, 0) -- (3.45,-3.45);
\draw[-] (3.5, 1.3) -- (11,-0.2);
\draw[dotted]	(3.5,-3.5) -- (3.5, 1.3);
\draw (3.5,-3.5) circle (2.5pt);
\filldraw [black] (3.5, 1.3) circle (2.5pt);

\draw[dashed](0, 2.5) -- (3, 0) -- (4, -2.5/3);

\draw	(2.8, -0.1) node[anchor=north] {$\zc_2$}
		(10,-0.1) node[anchor=north] {$\zc_1$};
		
\draw[-] (8,2.5) -- (9,2.5) node[anchor=west] {$u_1(z)$};
\draw[dashed] (8,1.5) -- (9,1.5) node[anchor=west] {$u_2(z)$};
		
\end{tikzpicture}
\caption{The expected utility functions of two agents in
  Example~\ref{exmp:two_bid_suboptimal_utilization_1}. 
   \label{fig:exmp_not_opt_1_u}
} 
\end{figure}
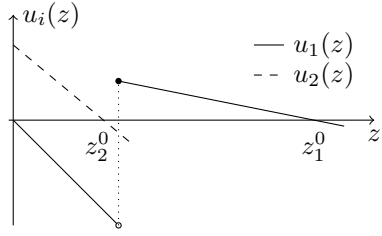

\begin{figure}[t!]
\centering
\begin{tikzpicture}[scale = 0.4][font = \small]

\draw[->] (-0.15,0) -- (12,0) node[anchor=north] {$z$};

\draw[->] (0,-0.8) -- (0, 3.5) node[anchor=west] {$\uhat_i(z)$};

\draw[-] (0, 2) -- (11,-0.2);

\draw[dashed](0, 2.5) -- (3, 0) -- (4, -2.5/3);

\draw	(2.8, -0.1) node[anchor=north] {$\zc_2$}
		(10,-0.1) node[anchor=north] {$\zc_1$};
		
\draw[-] (8,2.5) -- (9,2.5) node[anchor=west] {$\uhat_1(z)$};
\draw[dashed] (8,1.5) -- (9,1.5) node[anchor=west] {$\uhat_2(z)$};
		
\end{tikzpicture}
\caption{The subjective expected utility functions of two agents in
  Example~\ref{exmp:two_bid_suboptimal_utilization_1}. 
   \label{fig:exmp_not_opt_1_uhat}
} 
\end{figure}
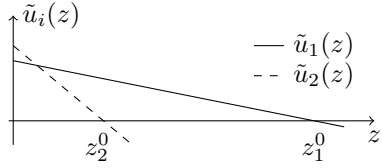

Under the second price auction, agents bid in DSE $b_{1, \txtSP}^\ast = \uhat_1(0) = 2$ and $b_{2, \txtSP}^\ast = \uhat_2(0) = (w_2 - c_2)p_2 = 2.5$. Agent $2$ gets assigned the resource and charged no penalty, achieving social welfare $(w_2-c_2)p_2 = 2.5$ and utilization $p_2 = 1/6$.

Under the 2BPB mechanism, the agents bid in DSE $\bmax_1^\ast = \zc_1 = (w_1 - c_1)p_1/(1-p_1) = 10$, and $\bmax_2^\ast = \zc_2 = (w_2 - c_2)p_2/(1-p_2) = 3$. Agent~$1$ is therefore assigned the resource, and will bid $\bmin_1^\ast = 3$ when asked to choose a penalty weakly above $\bmax_2^\ast = 3$ (this is because $\uhat_1(z)$ is monotonically decreasing in $z$ due to agent $1$'s naivete). When period $1$ comes, however, agent $1$ never shows up since the utility from using the resource appears to be $\beta_1 v_1\2 - c_1 = -3.5$, which is worse than paying the penalty and get -3. The 2BPB mechanism therefore achieves zero welfare and utilization. 
\qed
\end{example}

The second example shows that with fully sophisticated agents, it is still possible for the second price auction to achieve higher utilization.  

\begin{example} \label{exmp:two_bid_suboptimal_utilization_2}

Consider the allocation of one resource to two agents with $(c_i, p_i)$ types, where 
\begin{enumerate}[$\bullet$]
	\item $c_1 = 10$, $p_1 = 0.5$, $w_1 = 20$, $\beta_1 = \betahat_1 = 0.2$,
	\item $c_2 = 5$, $p_2 = 0.6$, $w_2 = 10$, $\beta_2 = \betahat_2 = 1$.
\end{enumerate}

Agent 1 is fully sophisticated, and agent $2$ is fully rational. $\uhat_i(z) = u_i(z)$ holds for both agents, and the expected utility functions are as shown in Figure~\ref{fig:exmp_not_opt_2_u}.

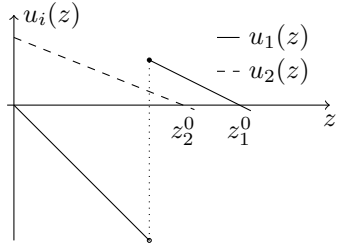
\begin{figure}[t!]
\centering
\begin{tikzpicture}[scale = 0.3][font = \small]

\draw[->] (-0.3,0) -- (14,0) node[anchor=north] {$z$};

\draw[->] (0,-6) -- (0, 4) node[anchor=west] {$u_i(z)$};

\draw[-] (0, 0) -- (6,-6);
\draw[-] (6, 2) -- (10.5, -0.25);
\draw[dotted]	(6,-6) -- (6, 2);
\draw (6,-6) circle (2.5pt);
\filldraw [black] (6, 2) circle (2.5pt);

\draw[dashed](0, 3) -- (7.5, 0) -- (8, -0.5/7.5*3);

\draw	(7.5, -0.1) node[anchor=north] {$\zc_2$}
		(10,-0.1) node[anchor=north] {$\zc_1$};
		
\draw[-] (9,3) -- (10,3) node[anchor=west] {$u_1(z)$};
\draw[dashed] (9,1.5) -- (10,1.5) node[anchor=west] {$u_2(z)$};
		
\end{tikzpicture}
\caption{The expected utility functions of two agents in
  Example~\ref{exmp:two_bid_suboptimal_utilization_2}. 
   \label{fig:exmp_not_opt_2_u}
} 
\end{figure}

Under the second price auction, agents bid in DSE $b_{1, \txtSP}^\ast = \uhat_1(0) = 0$ and $b_{2, \txtSP}^\ast = \uhat_2(0) = (w_2 - c_2)p_2 = 3$. Agent $2$ gets assigned the resource and charged no penalty, achieving social welfare $(w_2-c_2)p_2 = 3$ and utilization $p_2 = 0.6$.

Under the 2BPB mechanism, the agents bid in DSE $\bmax_1^\ast = \zc_1 = (w_1 - c_1)p_1/(1-p_1) = 10$, and $\bmax_2^\ast = \zc_2 = (w_2 - c_2)p_2/(1-p_2) = 7.5$. Agent~$1$ is therefore assigned the resource, and will bid $\bmin_1^\ast = 7.5$ when asked to choose a penalty weakly above $\bmax_2^\ast = 7.5$. Therefore, the 2BPB mechanism achieves social welfare $(w_1-c_1)p_1 = 5$ and utilization $p_1 = 0.5$. 
\qed

\end{example}

\section{Utilities and DSE Bides Under Different Type Models} \label{appx:derivations}

In this section, we derive for various type models the expected utility function and the dominant strategy equilibrium under different mechanisms.

\subsection{$(c_i, p_i)$ Type Model}

Consider an agent with $(c_i, p_i)$ type parametrized by $(c_i, p_i, w_i, \beta_i, \betahat_i,)$ who face 
a no-show penalty $z \in \setR$. In period~1, with probability $1-p_i$, the agent cannot show up, therefore gets utility $- z$. When probability $p_i$, the agent can show up at an immediate cost $c_i$. The agent believes that she will show up if and only if 
\begin{align*}
	\betahat_i w_i - c_i \geq -z_i \Leftrightarrow z_i \geq c_i - \betahat_i w_i.
\end{align*}
Therefore, $c_i - \betahat_i w_i$ is the ``minimum commitment'' the agent believes that she needs to ever show up to use the resource. When $z < c_i - \betahat_i w_i$, the agent never shows up and gets utility $-z$. When $z \geq c_i - \betahat_i w_i$, the agent does show up with probability $p_i$. The subjective expected utility of this agent is therefore:
\begin{align*}
	\uhat_i(z) = \pwfun{ -z, &\txtif z < c_i - \betahat_i w_i,\\
	(w_i - c_i)p_i - z_i (1-p_i),  & \txtif z \geq c_i - \betahat_i w_i.}
\end{align*}

When $c_i - \betahat_i w_i > 0$, the agent believes that she will not show up in a 2nd price auction, in which case she bids zero in DSE. When $c_i - \betahat_i w_i \leq 0$, she believes that she will show up with probability $p_i$, and bids her expected utility $(w_i - c_i) p_i$ from using the resource. The DSE bids under SP are therefore:
\begin{align*}
	b_{i, \txtSP}^\ast = \pwfun{(w_i - c_i) p_i, & \txtif c_i - \betahat_i w_i \leq 0, \\
	0, & \txtif c_i - \betahat_i w_i > 0.}
\end{align*}
The zero-crossing of the curve $(v_i - c_i)p_i - z_i (1-p_i)$ is 
\begin{align*}
	\zcmaxhat_i = \frac{(w_i - c_i) p_i}{1-p_i},
\end{align*}
therefore when $\zcmaxhat_i < c_i - \betahat_i w_i$, $\uhat_i(z) < 0$ for any $z > 0$, meaning that the agent will not participate in the 2BPB mechanism. 
When $\zcmaxhat_i \geq c_i - \betahat_i w_i$, we know that $\zc_i = \zcmaxhat_i$ is the zero-crossing of $\Uhat_i(z)$ , therefore the DSE bid on the maximum acceptable penalty is
\begin{align*}
	\bmax_{i}^\ast =& \pwfun{\zcmaxhat_i, & \txtif c_i - \betahat_i w_i \leq \zcmaxhat_i, \\
	0, & \txtif c_i - \betahat_i w_i > \zcmaxhat_i.}
\end{align*}

We also know that $\uhat_i(z)$ is monotonically decreasing when $z \geq  c_i - \betahat_i w_i$, therefore after given a minimum penalty $\underline{z}$, the agent will bid in DSE her preferred penalty 
\begin{align*}
	\bmin_{i}^\ast = \max\{c_i - \betahat_i w_i, ~\underline{z} \}.
\end{align*}

\if 0
$ c_i - \betahat_i w_i$ and $\zcmaxhat_i$ are the agent's minimum and maximum acceptable penalties, respectively. Therefore, the DSE bids are:
\begin{align*}
	\bmin_{i, \txtCSP}^\ast =&  \pwfun{c_i - \betahat_i w_i, & \txtif c_i - \betahat_i w_i \leq \zcmaxhat_i, \\
	0, & \txtif c_i - \betahat_i w_i > \zcmaxhat_i.} 
\end{align*}
\fi

If agent $i$ is assigned a resource and charged a penalty $z$, the actual utilization would be 
\begin{align*}
	\ut_i(z) =  p_i \cdot \one{z \geq c_i - \beta_i w_i},
\end{align*}
since when time $1$ comes, she will discount the future utility according to her true discounting factor $\beta_i$. The  expected social welfare is therefore
\begin{align*}
	\sw_i(z) =  p_i(w_i - c_i) \cdot \one{z \geq c_i - \beta_i w_i},
\end{align*}
and the agent's actual expected utility is
\begin{align*}
	u_i(z) = \pwfun{ -z, &\txtif z < c_i - \beta_i w_i,\\
	(v_i - c_i)p_i - z_i (1-p_i),  & \txtif z \geq c_i - \beta_i w_i.}
\end{align*}

The first-best utilization that can be achieved by this agent is
\begin{align*}
	\ut_i\fb =  p_i,
\end{align*}
and the first-best welfare is:
\begin{align*}
	\sw_i\fb = (w_i - c_i) p_i \one{c_i - \betahat_i w_i \leq \zcmaxhat_i}. 
\end{align*}
Note that even when $\zcmaxhat_i < c_i - \betahat_i w_i$ in which case $\uhat_i(z) < 0$ for any $z > 0$, we may still incentivize the agent to show up with probability $p_i$ by charging a no-show penalty weakly higher than $ \betahat_i w_i$, and also making a positive payment to the agent to incentivize participation. It is possible to do this without running a deficit since we achieve a positive welfare $(w_i - c_i)p_i$.

\subsection{Exponential Type Model}

Consider now the exponential type model, where an agent's type is parametrized by $\theta_i = (\lambda_i, w_i, \beta_i, \betahat_i)$. In period~$1$, with penalty $z$, the agent will show up to use the resource if and only if 
\begin{align*}
	V_i\1 + \beta_i w_i \geq -z \Leftrightarrow V_i\1 \geq -z - \beta_i w_i. 
\end{align*}
This happens with probability $\Pm{V_i\1 \geq -z - \beta_i w_i} = 1- e^{-\lambda_i(z + \beta_i w_i)}$, as long as $z + \beta_i w_i \geq 0 \Leftrightarrow z \geq - \beta_i w_i$. Therefore the actual utilization as a function of penalty $z$ is:
\begin{align*}
	\ut_i(z) = \pwfun{ 1- e^{-\lambda_i(z + \beta_i w_i)}, & \txtif z \geq - \beta_i w_i\\
	0, & \txtif z < - \beta_i w_i,}
\end{align*}
and the expected social welfare is:
\begin{align*}
	& \sw_i(z) \\ = & \pwfun{ w_i - \frac{1}{\lambda_i} + 
 e^{-\lambda_i (\beta_i w_i + z)} \left( \frac{1}{\lambda_i} - (1 - \beta_i) w_i + z) \right), & \txtif z \geq - \beta_i w_i\\
	0, & \txtif z < - \beta_i w_i.}
\end{align*}
With $z < -\beta_i w_i$, the agent never uses the resource, and gets expected utility $u_i(z) = -z$. When $z \geq -\beta_i w_i$, the agent gets expected utility:
\begin{align*}
	& \ut_i(z) \\ = & \E{ (V_i\1 + w_i) \one{V_i\1 + \beta_i w_i \geq -z} } - z \Pm{V_i\1 + \beta_i w_i <-z} \\
	= & \int_{0}^{z + \beta_i w_i} (-v + w_i) \lambda_i e^{-\lambda_i v} dv - z e^{-\lambda_i(\beta_i w_i + z)} \\ 
	=& w_i - 1/\lambda_i + e^{-\lambda_i (\beta_i w_i + z)} (1/\lambda_i - (1 - \beta_i)  w_i) .
\end{align*}
The agent, however, believes that her present bias factor is $\betahat_i$, therefore believes that her expected utility as a function of the penalty $z$ is:
\begin{align*}
	& \uhat_i(z) \\  = & \pwfun{ w_i - 1/\lambda_i + e^{-\lambda_i (\betahat_i w_i + z)} (1/\lambda_i - (1 - \betahat_i)  w_i), & \txtif z \geq -\betahat_i w_i, \\ 
	-z, & \txtif z < -\beta_i w_i.}
\end{align*}
\if 0
and she believes that her maximum acceptable penalty is:
\begin{align*}
	\zcmaxhat_i = -\betahat_i w_i + \frac{1}{\lambda_i} \log \left( \frac{1 - \lambda_i w_i (1 -  \betahat_i)}{1 - \lambda_i w_i }\right).
\end{align*}
\fi
$\uhat_i(0) \geq 0$ always holds, therefore under SP, the agent is going to bid:
\begin{align*}
	b_{i,\txtSP}^\ast = \uhat_i(0) =  w_i - 1/\lambda_i + e^{-\lambda_i \betahat_i w_i } (1/\lambda_i - (1 - \betahat_i)  w_i).
\end{align*}
Taking the derivative of $\uhat_i(z)$ w.r.t. $z$ for $z \geq -\betahat_i w_i$, we have:
\begin{align*}
	\frac{d}{dz}\uhat_i(z) = e^{-\lambda_i (\betahat_i w_i + z)} (-1 + \lambda_i w_i(1 - \betahat_i)). 
\end{align*}
When (A3) holds, $w_i < 1/\lambda_i$ implies $\frac{d}{dz}\uhat_i(z) < 0$, meaning that $\uhat_i(z)$ is monotonically decreasing in $z$, and that $\Uhat_i(z)$ and $\uhat_i(z)$ coincide. The zero-crossing (i.e. the maximum acceptable penalty) is therefore equal to 
\begin{align*}
	\zc_i = -\betahat_i w_i + \frac{1}{\lambda_i} \log \left( \frac{1 - \lambda_i w_i (1 -  \betahat_i)}{1 - \lambda_i w_i }\right).
\end{align*}
Under the 2BPB mechanism, the agent is going to bid in DSE a maximum penalty
\begin{align*}
	\bmax_{i}^\ast = \zc_i,
\end{align*}
and once given a minimum penalty $\underline{z}$, the agent will then bid the smallest possible $\bmin_{i}^\ast = \underline{z}$. 
The first-best social welfare for this agent can be achieved by setting $z = (1-\beta_i)w_i$, in which case the agent will use the resource if and only if $V_i\1 + w_i \geq 0$. The first-best welfare is therefore:
\begin{align*}
	\sw_i\fb =  w_i + ( e^{-\lambda_i w_i} - 1)/\lambda_i.
\end{align*}
The first-best utilization is achieved by charging the highest penalty s.t. $\sw_i(z) \geq 0$ still holds, i.e. it is possible for the outcome to be both budget balanced and individually rational. Solving the equation, we get the maximum penalty that we can charge as:
\begin{align*}
	z_i\fb = -1/\lambda_i + (1-\beta_i) w_i + \frac{1}{\lambda_i}
 \mathrm{ProductLog} \left(-1, e^{-1 + \lambda_i w_i} (-1 + \lambda_i w_i)\right).
\end{align*}
Here, $\mathrm{ProductLog}$ (also called the Lambert $W$ function) is the inverse relation of the function $f(s) = se^s$. 
The first best utilization achieved at penalty $z_i\fb $ is therefore:
\begin{align*}
	\ut_i\fb = \ut_i(z_i\fb) = 1 - e^{1 - \lambda_i w_i + 
  \mathrm{ProductLog}\left(-1, e^{-1 + \lambda_i w_i} (-1 + \lambda_i w_i) \right)}. 
\end{align*}

\subsection{Uniform Type Model}

We now consider the uniform type model, where an agent type is parametrized by $(\alpha_i, w_i, \beta_i, \betahat_i)$. In period~1, with penalty $z$, the agent will show up to use the resource if and only if 
\begin{align*}
	V_i\1 + \beta_i w_i \geq -z \Leftrightarrow V_i\1 \geq -z - \beta_i w_i. 
\end{align*}
With $V_i\1 \sim \mathrm{U}[-\alpha_i, ~0]$, we know that there are three cases depending on $z$:
\begin{enumerate}[$\bullet$]
	\item when $z \leq -\beta_i w_i$, $-z - \beta_i w_i$ is strictly positive, thus the agent never shows up, resulting in utilization and welfare both equal to zero.
	\item when $z > \alpha_i -\beta_i w_i$, $-z - \beta_i w_i < -\alpha_i$ so that the agent always shows up. The utilization is therefore equal to $1$, and the welfare is equal to $\E{V_i\1 + w_i} = w_i -\alpha_i/2$. 
	\item when $z \in (-\beta_i w_i, \alpha_i -\beta_i w_i]$, the agent shows up with probability $(z + \beta_i w_i)/\alpha_i$. 
\end{enumerate}
Putting the three cases together, we know that the utilization as a function of the penalty $z$ is:
\begin{align*}
	\ut_i(z) = \pwfun{0, &\txtif z < -\beta_i w_i, \\
		(z + \beta_i w_i)/\alpha_i, & \txtif -\beta_i w_i \leq z < \alpha_i - \beta_i w_i, \\
		1, & \txtif z > \alpha_i - \beta_i w_i.} 
\end{align*}
The expected social welfare is:
\begin{align*}
	\sw_i(z) = \pwfun{0, &\txtif z < -\beta_i w_i, \\
		\frac{z + \beta_i w_i}{\alpha_i} \left( w_i - \frac{z + \beta_i w_i}{2}\right), & \txtif -\beta_i w_i \leq z < \alpha_i - \beta_i w_i, \\
		w_i - \alpha_i/2, & \txtif z > \alpha_i - \beta_i w_i,}
\end{align*}
and the agent's expected utility is:
\begin{align*}
	& u_i(z)  \\ = & \pwfun{-z, &\txtif z < -\beta_i w_i, \\
		\frac{z + \beta_i w_i}{\alpha_i} \left( w_i - \frac{z + \beta_i w_i}{2}\right) - z \frac{\alpha_i - (z + \beta_i w_i)}{\alpha_i}, & \txtif -\beta_i w_i \leq z < \alpha_i - \beta_i w_i, \\
		w_i - \alpha_i/2, & \txtif z > \alpha_i - \beta_i w_i.}
\end{align*}
$\uhat_i(z)$ can be obtained simply by replacing $\beta_i$ with $\betahat_i$ in the above expression. 
$\uhat_i(0) \geq 0$ always holds, Therefore under SP, the agent is going to bid:
\begin{align*}
	b_{i,\txtSP}^\ast = \uhat_i(0) =  \frac{\betahat_i w_i}{\alpha_i} \left( w_i - \frac{\betahat_i w_i}{2}\right).
\end{align*}
Note that for $-\betahat_i w_i \leq z < \alpha_i - \betahat_i w_i$, $\uhat_i(z)$ can be rewritten in the following quadratic form: 
\begin{align*}
	\uhat_i(z) = \frac{1}{2\alpha_i}\left( z^2 - 2(\alpha_i - w_i)z + w_i^2 \betahat_i (2 - \betahat_i)\right).
\end{align*}
The minimum is achieved at $z_i^\ast = \alpha_i - w_i$, which is 
\begin{align*}
	\uhat_i(z_i^\ast) =  \frac{1}{2\alpha_i}\left( - (\alpha_i - w_i)^2 + w_i^2 \betahat_i (2 - \betahat_i)\right).
\end{align*}
When (A3) holds i.e. $\E{V_i\1 + w_i} < 0$, we have $w_i < \alpha_i/2$, which implies $\uhat_i(z^\ast) \leq 0$ for any $\betahat_i \in [0,1]$. As a result, $\uhat_i(z)$ is monotonically decreasing in $z$ for $z \leq z_i^*$, monotonically increasing for $z > z_i^*$, and $\uhat_i(z) \leq 0$ holds for all $z < z_i^*$. This implies that 
%
$\uhat_i(z) $ and $\Uhat_i(z)$ coincide for all $z$ s.t. $\uhat_i(z) \geq 0$, and that the zero-crossing of $\uhat_i(z)$ and $\Uhat_i(z)$ (i.e. the maximum acceptable penalty) is of the form:
\begin{align*}
	\zc_i = \alpha_i - w_i - \sqrt{\alpha_i^2 - 2 \alpha_i w_i + (-1 + \betahat_i)^2 w_i^2}.
\end{align*}
The DSE bid on maximum penalty under the 2BPB mechanism is therefore:
\begin{align*}
	\bmax_{i}^\ast = \zc_i,
\end{align*}
and once allocated given a minimum penalty $\underline{z}$, the agent will then bid $\bmin_{i}^\ast = \underline{z}$. 
Similar to the exponential model, the first-best welfare is achieved by setting the penalty as $z = (1-\beta_i)w_i$, in which case
\begin{align*}
	\sw_i\fb = \frac{w_i}{\alpha_i} \left( w_i - \frac{w_i}{2}\right) = \frac{w_i^2}{2\alpha}.
\end{align*}
The first-best social welfare is achieved at $z = 2 w_i - \beta_i w_i$, in which case:
\begin{align*}
	\ut_i\fb = (2w_i - \beta_i w_i + \beta_i w_i) / \alpha_i = 2w_i/\alpha_i. 
\end{align*}

\end{document}